\newtheorem{lemma}{Lemma}
\newtheorem{remark}{Remark}
\def\BibTeX{{\rm B\kern-.05em{\sc i\kern-.025em b}\kern-.08em
    T\kern-.1667em\lower.7ex\hbox{E}\kern-.125emX}}
\begin{document}
\title{Quantum Annealing-Based Sum Rate Maximization for Multi-UAV-Aided Wireless Networks}

\author{Seon-Geun Jeong, Pham Dang Anh Duc, Quang Vinh Do, Dae-Il Noh, Nguyen Xuan Tung, Trinh Van Chien, Quoc-Viet~Pham,~\IEEEmembership{Senior Member,~IEEE}, Mikio Hasegawa,~\IEEEmembership{Member,~IEEE}, Hiroo Sekiya,~\IEEEmembership{Senior Member,~IEEE}, and Won-Joo Hwang,~\IEEEmembership{Senior Member,~IEEE}

\thanks{Seon-Geun Jeong, Dae-Il Noh and Nguyen Xuan Tung are with the Department of Information Convergence Engineering, Pusan National University, Pusan National University, Busan 46241, South Korea (e-mail: wjdtjsrms11@pusan.ac.kr, nohdi1991@pusan.ac.kr), tung.nguyenxuan1310@pusan.ac.kr).}%

\thanks{Quang Vinh Do is with the Wireless Communications Research Group, Faculty of Electrical and Electronics Engineering, Ton Duc Thang University, Ho Chi Minh City, Vietnam (e-mail: dovinhquang@tdtu.edu.vn).}%

\thanks{Hiroo Sekiya is with the Graduate School of Engineering, Chiba University, Chiba, 263-8522, Japan (e-mail: sekiya@faculty.chiba-u.jp).}%

 \thanks{Trinh Van Chien and Pham Dang Anh Duc  are with the School of Information and Communication Technology (SoICT), Hanoi University of Science and Technology (HUST), 100000 Hanoi, Vietnam (email: chientv@soict.hust.edu.vn,
duc.pda210207@sis.hust.edu.vn)}

\thanks{Quoc-Viet~Pham is with the School of Computer Science and Statistics, Trinity College Dublin, Dublin, D02PN40, Ireland (e-mail: viet.pham@tcd.ie).}

\thanks{Mikio Hasegawa is with the Department of Electrical Engineering, Tokyo University of Science, Tokyo, 162-8601, Japan (e-mail: hasegawa@haselab.ee.kagu.tus.ac.jp).}%

\thanks{Won-Joo Hwang (corresponding author) is with the School of Computer Science and Engineering, Center for Artificial Intelligence Research, Pusan National University, Busan 46241, South Korea (e-mail: wjhwang@pusan.ac.kr).}%

%\thanks{This paper was produced by the IEEE Publication Technology Group. They are in Piscataway, NJ.}% <-this % stops a space
%\thanks{Manuscript received November 1, 2022.}
}
\maketitle

\begin{abstract}
In wireless communication networks, it is difficult to solve many NP-hard problems owing to computational complexity and high cost. Recently, quantum annealing (QA) based on quantum physics was introduced as a key enabler for solving optimization problems quickly. However, only some studies consider quantum-based approaches in wireless communications. Therefore, we investigate the performance of a QA solution to an optimization problem in wireless networks. Specifically, we aim to maximize the sum rate by jointly optimizing clustering, sub-channel assignment, and power allocation in a multi-unmanned aerial vehicle-aided wireless network. We formulate the sum rate maximization problem as a combinatorial optimization problem. Then, we divide it into two sub-problems: 1) a QA-based clustering and 2) sub-channel assignment and power allocation for a given clustering configuration. Subsequently, we obtain an optimized solution for the joint optimization problem by solving these two sub-problems. For the first sub-problem, we convert the problem into a simplified quadratic unconstrained binary optimization (QUBO) model. As for the second sub-problem, we introduce a novel QA algorithm with optimal scaling parameters to address it. Simulation results demonstrate the effectiveness of the proposed algorithm in terms of the sum rate and running time.
\end{abstract}

\begin{IEEEkeywords}
Combinatorial optimization, quadratic unconstrained binary optimization, quantum annealing, resource allocation, UAV.
\end{IEEEkeywords}

\section{Introduction}
\label{sec:introduction}
\IEEEPARstart{U}{nmanned} aerial vehicles (UAVs) have been deployed in a wide range of communication scenarios to enhance network coverage and performance owing to their deployment flexibility, versatility, and cost-effectiveness \cite{oubbati2020softwarization,ullah2020cognition}.
Benefiting from line-of-sight (LoS) propagation, UAV-aided communications have emerged for wireless connectivity \cite{7470933}. UAVs can serve as aerial base stations (ABSs) to support high spectral-efficient communications in underserved areas \cite{8660516}. UAVs can also help recover essential communication services in disaster areas where the terrestrial base station (BS) is damaged \cite{8466046,8669870}.
They can operate as relays for connecting dead zone users with the BSs and expanding the coverage area \cite{8068199}. 
Moreover, different UAV-based aerial platforms for wireless services have attracted tremendous research attention from academic and industrial communities \cite{9584850,zeng2016throughput,zhang2019cellular,yang2018three}. Despite these promising advantages, deploying UAVs for wireless communications poses many practical challenges \cite{nguyen2018real, 7936601, 8789457, 8038869,9003500,9662406,8760424, shang2020spectrum, wang2018enabling}.

% Real-time control for resource allocation in UAV networks.
Resource allocation and real-time control are the two main challenges in the design employment of UAV-aided networks \cite{8038869,9003500,9662406,8760424,7936601, 8789457, nguyen2018real}.
Conventionally, UAV communication is implemented on unlicensed spectrum, such as industry and science bands. With the tremendous growth in mobile devices and high-throughput applications, the unlicensed spectrum cannot guarantee communication performance. This issue comes up with the idea of sharing licensed spectrum bands to support high-quality connectivity for UAVs and increasing on-demand services for users \cite{wang2018enabling}. However, sharing spectrum resources also gives rise to extra interference, especially in practical multi-UAV-aided networks where handling co-channel interference is much more complicated than in terrestrial systems \cite{8760424}. Specifically, in a network provided by UAVs, as more
UAVs are deployed to provide services to ground users (GUs),
communication between these UAVs are often deployed on unlicensed frequency bands. These unlicensed frequency bands do
not guarantee communication performance due to frequency
resource limitations. Consequently, UAVs have to operate
on the same frequency, causing their signals to overlap, which will cause signal interference leading to co-channel interference. Moreover, the service area of each UAV impacts the channel gain between the ground users (GUs) and the UAVs. When the number of GUs increases, the UAVs' limited spectrum resources and transmit power lead to low throughput and high traffic load \cite{7936601, 8789457}. Therefore, an efficient method is required to mitigate co-channel interference.
%To minimize co-channel interference in a UAVs integrated network, several methods have been proposed such as optimizing transmission power \cite{9189828}, optimizing channel assignment and power allocation \cite{9512435}. These approaches collectively aim to reduce interference and enhance communication quality.
Efficient real-time control and operation is an important aspect of UAV communications due to its lifetime and dynamic environment \cite{nguyen2018real}. Thus, many optimization problems need to be solved in a short time. 
Numerous studies focused on performance optimization for UAV-aided networks, e.g., energy saving \cite{yang2018energy}, time and power optimization \cite{wang2017resource}, and secure communication \cite{zhou2019secure}. 
 Therefore, obtaining optimal solutions is difficult for most of these problems owing to computation complexity and high cost. 

%Combinatorial optimization problem and Quantum annealing machines.
Quantum computing has exhibited potential applications in wireless communications by employing hardware-based search algorithms, such as \textit{D-Wave} \cite{dwave} and coherent ising machine (CIM) \cite{takesue2020simulating}, to enable fast optimization of many large-scale problems based on Ising models. \textit{D-Wave} quantum annealing machines (QAMs) are designed to solve optimization problems using stochastic Hamiltonians. Starting from \textit{D-Wave One} with 128 qubits released in 2011, the \textit{D-Wave Advantage2} prototype with 7000 qubits is being developed. Recently, \textit{D-Wave} has introduced a hybrid solver service (HSS) that contains three solvers, including the binary quadratic model (BQM), discrete quadratic model (DQM), and constrained quadratic model (CQM). These solvers can be applied to combinatorial optimization problems defined on discrete or continuous variables \cite{dwave}.
The BQM and DQM solvers read unconstrained quadratic problems defined on a maximum of 1,000,000 binary and 5,000 discrete variables, respectively. The most general CQM solver can be used for problems defined on a maximum of 500,000 binaries, integers, and real numbers. It is worth noting that the CQM solver is the best method for representing and solving constrained optimization problems \cite{dwave}. Every solver in the HSS has a classical front-end that reads an input and a time limit. It invokes one or more hybrid heuristic solvers to search for good-quality solutions for the given input. The heuristic solvers run in parallel on state-of-the-art CPU or GPU platforms provided by Amazon web services (AWS). In such platforms, a classical heuristic module explores the solution space, while a quantum module formulates quantum queries that are sent to a back-end Advantage QPU. 
%Although \textit{D-Wave's} HSS is not pure quantum computing, this combination of classical and quantum solution methods can exhibit a phenomenon known as hybrid acceleration, whereby the hybrid workflow can find better solutions faster than a purely classical workflow.
While \textit{D-Wave's} HSS does not solely rely on quantum computing, the integration of quantum solution methods can lead to a phenomenon called hybrid acceleration. This unique hybrid workflow can find better solutions at a faster rate compared to purely classical approaches.

Quantum annealing (QA), a heuristic method based on quantum physics for solving the complex combinatorial optimization problem, has appeared in various disciplines \cite{dwave}.  
Using QA methods, original optimization problems can be converted into quadratic unconstrained binary optimization (QUBO) or Ising models, with penalty and scaling parameters \cite{9065239}. The fundamental rule is to use quantum physics to find the lowest energy states of the problems, which are made by the optimal or approximately optimal combinations of spins. The main advantage of this method is that it can obtain the solutions quickly by using QAMs. However, a clear answer to the speedup question is still being investigated, which remains an active research area \cite{RevModPhys.90.015002}. Moreover, QA is a promising tool that provides fast solutions to many optimization problems in wireless communications. However, no existing research considers quantum-based approaches for clustering and resource allocation in multi-UAV-aided networks. 

The main contributions of this paper can be summarized as follows:
\begin{itemize}
    \item We develop a novel QA framework to enhance the sum rate for multi-UAV-aided wireless networks. To achieve this, we aim to maximize the sum rate by jointly optimizing user clustering, sub-channel assignment, and power allocation for the downlink under the influence of co-channel interference\footnote{{Multiple access method such as non-orthogonal multiple access (NOMA), is also applied to solve co-channel interference, but the complexity of this method makes it less suitable for wireless networks with limited resources. NOMA enhances spectral efficiency by allowing multiple users to share the same time and frequency resources when overlapping signals. However, this technique also significantly increases computational demands, making it impractical for low-power IoT devices \cite{noma}. We acknowledge the potential of this approach in other scenarios, but this method is out of our topic, and we plan to explore its applicability under different system models in future research.}}. The sum rate maximization problem is formulated as a combinatorial optimization problem, which is then transformed into a QUBO model. To the best of our knowledge, this is the first attempt at using a QA method for optimization problems in multi-UAV-aided wireless networks.
    \item To solve this optimization problem, we divide it into two sub-problems. The first is an optimization of QA-based user clustering, and the second is the joint optimization of sub-channel assignment and power allocation for given clusters. We transform the problems into QUBO models %using the mixed-integer linear fractional programming (MILFP) method \cite{6858622} and the Taylor series expansion. We then 
    and obtain the approximately optimal solution using a \textit{D-Wave}'s QAM.
    %\item The second sub-problem is joint optimization of sub-channel assignment and power allocation for a given clustering, which is transformed into the QUBO model with the mixed-integer linear fractional programming (MILFP) \cite{6858622}. Before that, we apply Taylor series expansion to the formulated problem because it is hard to transform the logarithmic to the QUBO model directly. 
    \item Solving these QUBO models is still challenging because the model's scaling and penalty parameters must be determined in advance. Therefore, we present a method to derive the lower and upper bounds for penalty parameters. Moreover, we employ the Taylor series approximation and Mixed-integer linear fractional programming (MILFP) method to derive optimal scaling parameters. As far as we know, our paper is the first to apply this method for finding the scaling parameters of the QUBO model. Until now, most of the previous papers only adjust the scaling parameters by using heuristic methods.
    \item Finally, we provide comprehensive simulation results to verify the efficiency of the proposed solution in terms of the clustering, sum rate, and computational time. Specifically, our method improves clustering with poor-matching to approximately \(0\% \), outperforming benchmark methods. In addition, the sum rate is also significantly improved when achieving better results than benchmark methods in the considered scenarios. Finally, the running time of our method is much faster and remains almost constant while the running time of compared methods increases with the size of the problem.
\end{itemize}

The rest of this paper is organized as follows. In Section \ref{sec_system}, we present the system model of a multi-UAV-aided wireless network and formulate the sum rate maximization problem as a combinatorial optimization problem. In Section \ref{sec_solution}, a joint optimization method is proposed to obtain the locally optimal solution by dividing the maximization problem into two sub-problems, where we transform the sub-problems into a QUBO model with Taylor series expansion and the MILFP method. Simulation results are presented in Section \ref{Simulationresult}, and finally, we conclude the paper in Section \ref{sec_conclusion}.

\section{Problem Description}
\label{sec_system}
\subsection{System Model}
\renewcommand{\figurename}{{Fig.}}
\begin{figure}[ht]
\centering
\includegraphics[width=\linewidth]{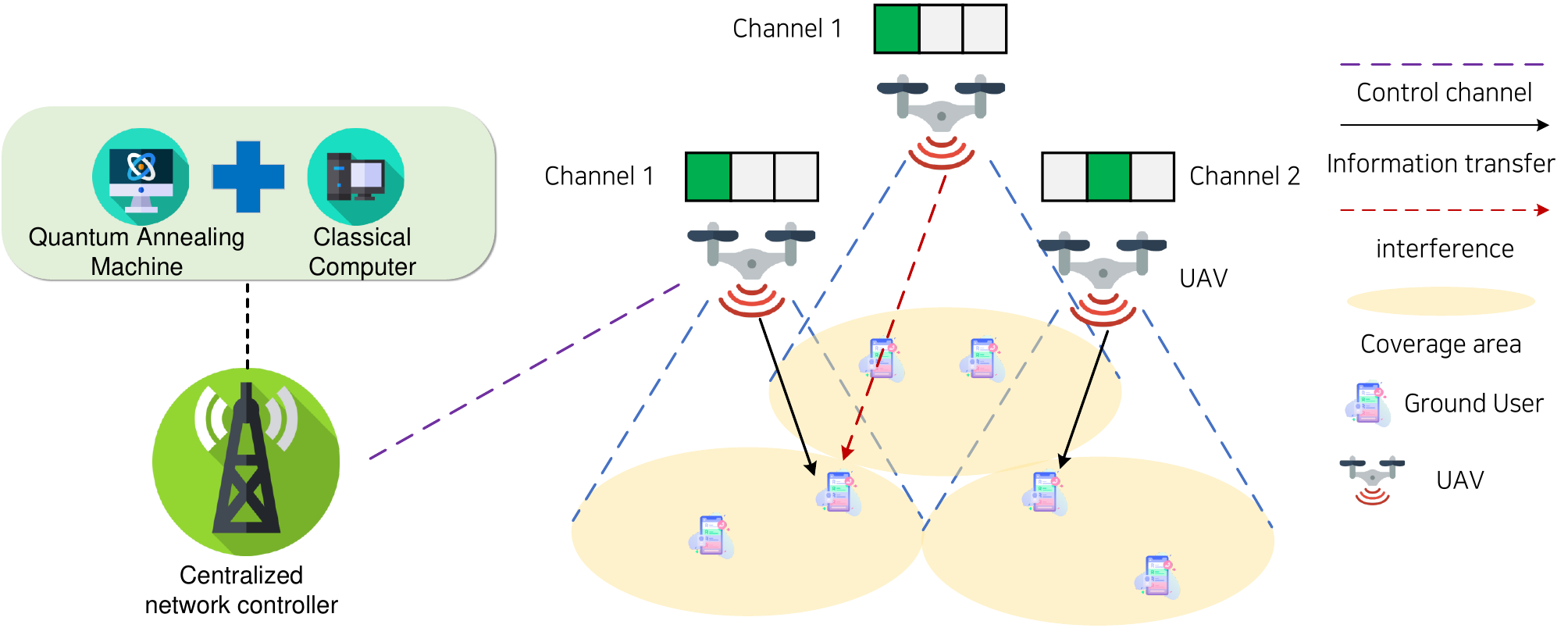}
\caption{System model.}
\label{fig:systemmodel}
\end{figure}
We consider a multi-UAV-aided wireless network where a set $\mathcal{M}$ of $M$ single-antenna UAVs are deployed as ABSs to provide data services to single-antenna GUs on an optimized trajectory{
\footnote{{
      We acknowledge the importance of UAV's trajectory for increasing the efficiency of data transmission to GUs \cite{9750200}, \cite{10185601}. Joint optimization of trajectory and other variables optimize the solution, however trajectory is the master variable so it has high impact on the solutions of other variables. Solving optimization problems at each coordinate of the UAV can also be expanded in future work when simultaneously considering the trajectory of the UAV as well as other variables in the model. From there, we consider scenarios assuming that the trajectory of the UAV has been optimized and predetermined throughout the operation, and we focus on sub-problems such as clustering optimization, subchannel assignment, and power control.}}
} over a specific duration \cite{9126869}. The duration time is divided into $T$ time slots, denoted as $\mathcal{T} = \{t_1,t_2,\dots,t_T\}$ and we consider the problem on each of these time slots. We assume that the system operation is managed by a centralized network controller located in the ground station (GS) that frequently sends control signals to UAVs and GUs via control channels. { Specifically, the controller takes advantage of the computing power of a quantum computer, using a QPU instead of a CPU to analyze bandwidth allocation and power control for downlink data transmission from UAVs to GUs.}
Each UAV $m \in \mathcal{M}$ hovers at a constant altitude $H$ to serve a set $\mathcal{N}_{m}$ of GUs that are randomly distributed within the UAV's coverage area. The set of all GUs in the system is denoted by $\mathcal{N} \triangleq \cup_{m=1}^{M} \mathcal{N}_{m}$. 
The total system bandwidth is divided into $K$ sub-channels, denoted as $\mathcal{K} = \{1,2,\dots,K\}$. It is noteworthy that sub-channels occupied by UAVs may overlap with each other.

The horizontal locations of UAV $m \in \mathcal{M}$ and GU $n \in \mathcal{N}$ are denoted by $\boldsymbol{u}_{m} = [x_{m},y_{m}]^T$ and $\boldsymbol{v}_{n} = [x_{n},y_{n}]^T$, respectively. Thus, the distance between them can be calculated as $d_{m,n} =\sqrt{\| \boldsymbol{u}_{m} -\boldsymbol{v}_{n} \| ^{2} + H^2}$. 
In the considered system, both line-of-sight (LoS) and non-line-of-sight (NLoS) links are possible. The GUs can face NLoS links with UAVs owing to uncertain obstacles, such as buildings and trees. Therefore, we consider a probabilistic path-loss model, in which the UAV-to-GU communication channels can be modeled as either LoS or NLoS links. The LoS and NLoS path-loss between UAV $m$ and GU $n$ is given as follows \cite{6863654}: 
\begin{equation}
g_{m,n}^{LoS} = 20\log(4\pi f_c d_{m,n}/c) + \eta^{LoS},
\end{equation}
\begin{equation}
g_{m,n}^{NLoS} = 20\log(4\pi f_c d_{m,n}/c) + \eta^{NLoS},
\end{equation}
where $f_c$ and $c$ are the carrier frequency and microwave speed, respectively. $\eta^{LoS}$ and $\eta^{NLoS}$ are additional signal-attenuation factors for LoS and NLoS, respectively.
In this context, the LoS probability $\rho^{LoS}_{m,n}$ for the link between UAV $m$ and GU $n$ is given as
\begin{equation}
\rho _{m,n}^{LoS} =\frac{1}{1+ae^{-b\left(\frac{180}{\pi }\arcsin\left(\frac{H}{d_{m,n}}\right) -a\right)}},
\end{equation}
where $a$ and $b$ are environment-dependent parameters. In addition, the NLoS probability is expressed by $\rho_{m,n}^{NLoS}=1-\rho_{m,n}^{LoS}$. Accordingly, the probabilistic path-loss between UAV $m$ and GU $n$ can be calculated as follows: 
\begin{equation}
g_{m,n} = g_{m,n}^{LoS} \rho_{m,n}^{LoS} +g_{m,n}^{NLoS} \rho_{m,n}^{NLoS}.
\end{equation}

The sub-channel assignment variable is denoted as $\beta^{k}_{m} \in \{0,1\}$, which indicates that channel $k$ is allocated to UAV $m$ if $\beta^{k}_{m}=1$; otherwise, $\beta^{k}_{m}=0$. We assume that each UAV can only occupy one sub-channel, as follows:
\begin{equation}
\sum _{k\in \mathcal{K}} \beta^{k}_{m} \leq 1,\ \forall m \in \mathcal{M}.
\end{equation}
We also denote the user-association variable as $s_{m,n} \in \{0,1\}$, which indicates that GU $n$ is associated with UAV $m$ if $s_{m,n} = 1$; otherwise, $s_{m,n} = 0$. Each GU can only be associated with one UAV, thus, we have
\begin{equation}
\sum _{m\in \mathcal{M}} s_{m, n} = 1,\ \forall n \in \mathcal{N}.
\end{equation}
We adopt discrete transmit power control for UAVs, where the transmit power of UAV $m$ is selected from a predefined list as $P_{m} \in \{P_m^1,P_m^2,\dots,P_m^L\}$ with $P_m^1 < P_m^2 < \dots < P_m^L$. Let $p_{m}^{l} \in \{0,1\}$ with $l \in \mathcal{L} = \{1,2,\dots,L\}$ denote the power control variable, where UAV $m$ is assigned a transmit power at level $l$ (i.e., $P_m = P_m^l$) if $p_{m}^{l} = 1$; otherwise, $p_{m}^{l} = 0$. We note that only one power level can be assigned to the UAV, thus
\begin{equation}
\sum_{l \in \mathcal{L}} p_{m}^{l} \leq 1, \ \forall m \in \mathcal{M}.
\end{equation}

Let $x_{m,n}^{k,l}$ denote the signal to be transmitted to GU $n$ on channel $k$ from the UAV $m$ with transmit power $P_{m}^{l}$, which is expressed by
\begin{equation}
\begin{split}
\label{signal}
x_{m,n}^{k,l}= \beta_{m}^{k}s_{m,n}\sqrt{P_{m}^{l}}p_{m}^{l}t_{m,n}^{k},
\end{split}
\end{equation}
% where $t_{m,n}^{k}$ is the transmission symbol for GU $n$ on the sub-channel $k$ from the UAV $m$. where $t_{m,n}^{k}$ is the transmission symbol for GU $n$ on the sub-channel $k$ from the UAV $m$. {\color{red}The signal received at GU $n$ is given by
% \begin{equation}
% \begin{split}
% \label{receivedsignal_1}
% y_{m,n}^{k,l} = g_{m,n}x_{m,n}^{k,l}+ \sum\limits_{\substack{m'\in \mathcal{M} \\m'\neq m}}\sum\limits_{\substack{n'\in \mathcal{N} \\n'\neq n}}\sum\limits _{l'\in \mathcal{L}} g_{m',n} x_{m',n'}^{k,l'}  + z_n, 
% \end{split}
% \end{equation}
% %\sum\limits_{\substack{m'\in \mathcal{M} \\m'\neq m}}
% %\sum\limits_{\substack{n'\in \mathcal{N} \\n'\neq n}}
% where $z_n \sim \mathcal{CN}\left (0,\sigma^2 \right )$ denotes the additive white Gaussian noise with zero mean at GU $n$ } 
Each GU treats all the signals on the same sub-channel from other GUs connected to other UAVs as interference.
Thus, the received signal power at GU $n$ served by UAV $m$ over sub-channel $k$ and the transmit power $P_{m}^{l}$ and the co-channel interference caused by other UAVs operating on the same sub-channel $k$ with transmit power $P_{m'}^{l'}$ can be calculated as 
\begin{equation}
\begin{split}
\label{receivedsignal_S}
S_{m,n}^{k,l} = \left|g_{m,n}\right|^2 \beta_{m}^{k}P_{m}^{l}p_{m}^{l}
\end{split}
\end{equation}
\begin{equation}
\begin{split}
\label{Interference_I}
I_{m,n}^{k,l} = \sum\limits _{\substack{m'\in \mathcal{M} \\m'\neq m}}\sum\limits _{l'\in \mathcal{L}} \left | g_{m',n} \right |^2 \beta _{m'}^{k} P_{m'}^{l'}p_{m'}^{l'}.
\end{split}
\end{equation}
As a result, the signal-to-interference-plus-noise ratio (SINR) and the data rate of the GU $n$ served by the UAV $m$ over the sub-channel $k$ with transmit power $P_{m}^{l}$ can be expressed as
\begin{equation}
\begin{split}
\label{eq:sir}
\gamma _{m,n}^{k,l} =\frac{\left |g_{m,n}\right|^2 \beta^{k}_{m}P_{m}^{l}p_{m}^{l}}{I_{m,n}^{k,l} +\left |z_n\right |^2},
\end{split}
\end{equation}
\begin{equation}
\begin{split}
\label{rate_nmk}
R_{m,n}^{k,l}=s_{m,n}\log\left(1+\gamma_{m,n}^{k,l}\right).
\end{split}
\end{equation}
where $z_n \sim \mathcal{CN}\left (0,\sigma^2 \right )$ denotes the additive white Gaussian noise with zero mean at GU $n$.
In a network with multiple UAVs and GUs, the co-channel interference may be large enough to deteriorate the communication links of the UAVs.
Our goal is to improve the total achievable rate via user clustering and resource allocation in the system.
To achieve this, we focus on reducing interference by jointly optimizing clustering, sub-channel assignment, and transmit power allocation.

\subsection{Problem Formulation}
We aim to enhance the sum rate of the considered multi-UAV-aided wireless network by jointly optimizing clustering, sub-channel assignment, and UAV transmit power control. Let $\boldsymbol{\beta}=\{\beta^{k}_{m}\} \in \mathbb{R} ^{M \times K}$, $\boldsymbol{S}=\{s_{m,n}\} \in \mathbb{R} ^{M \times N}$, and $\boldsymbol{P} = \{p_{m}^{l}\} \in \mathbb{R}^{M\times L}$ be the sub-channel assignment, user association, and transmit power allocation vectors, respectively. The sum rate maximization problem is formulated as
\begin{equation}\label{eq:prob}
\begin{aligned}
\underset{\boldsymbol{\beta},\boldsymbol{S} ,\boldsymbol{P}}{\text{maximize}} \ 
& \sum\limits _{t\in \mathcal{T}}\sum\limits _{m\in \mathcal{M}}\sum\limits _{n\in \mathcal{N}}\sum\limits _{k\in \mathcal{K}} \sum\limits _{l\in \mathcal{L}} R_{m,n,t}^{k,l}\\
\text{s. t.} \quad
& \text{C1:} \ \sum _{k\in \mathcal{K}} \beta ^{k}_{m} \leq 1,\ \forall m\in \mathcal{M} ,\\
& \text{C2:} \ \beta ^{k}_{m} \in \{0,1\} ,\ \forall m \in \mathcal{M}, \forall k \in \mathcal{K} ,\\
& \text{C3:} \ \sum _{m\in \mathcal{M}} s_{m,n} = 1,\ \forall n\in \mathcal{N} ,\\
& \text{C4:} \ s_{m,n} \in \{0,1\} ,\ \forall m \in \mathcal{M}, \forall k \in \mathcal{K} ,\\
& \text{C5:} \ \sum _{l \in \mathcal{L}} p_{m}^{l} \leq 1,\ \forall m \in \mathcal{M} ,\\
& \text{C6:} \  p^{l}_{m} \in \{0,1\} ,\ \forall m \in \mathcal{M}, \forall l \in \mathcal{L}, 
\end{aligned}
\end{equation}
where C1 and C2 represent sub-channel assignment constraints; C3 and C4 represent user-association constraints; C5 and C6 are transmit power constraints. Problem \eqref{eq:prob} is defined as a combinatorial optimization problem that is difficult to solve optimally within a reasonable time (i.e., NP-hard), particularly in the case of large-scale networks. Moreover, it is necessary to transform the formulated problem into the QUBO or Ising model, so it can be solved by using a QAM. We decompose the original problem into two sub-problems, including user clustering and resource allocation. We then convert the problems into QUBO models and develop QA-based algorithms to solve the problems.
The following section presents the QA-based solution to the formulated problem.

\begin{remark}
    A server (central processing unit) is typically utilized to gather information about UAV positions and IoT devices' statuses, enabling effective management of network operations. Solving the joint optimization problem in a centralized manner aligns with emerging 5G/6G technologies, such as Software-Defined Networking (SDN) \cite{8802245} and Open Radio Access Network (ORAN) \cite{10601697}, which emphasize centralized control and flexibility. The proposed centralized implementation in our work is compatible with these technologies, leveraging centralized controllers for seamless coordination, efficient resource allocation, and global optimization. This integration ensures that the system operates cohesively to achieve the desired performance objectives while remaining adaptable to advancements in network architecture and infrastructure.
\end{remark}
\section{Quantum Annealing-Based Optimization Solution}
\label{sec_solution}
\renewcommand{\figurename}{{Fig.}}
\begin{figure}[t]
\centering
\includegraphics[width=\linewidth]{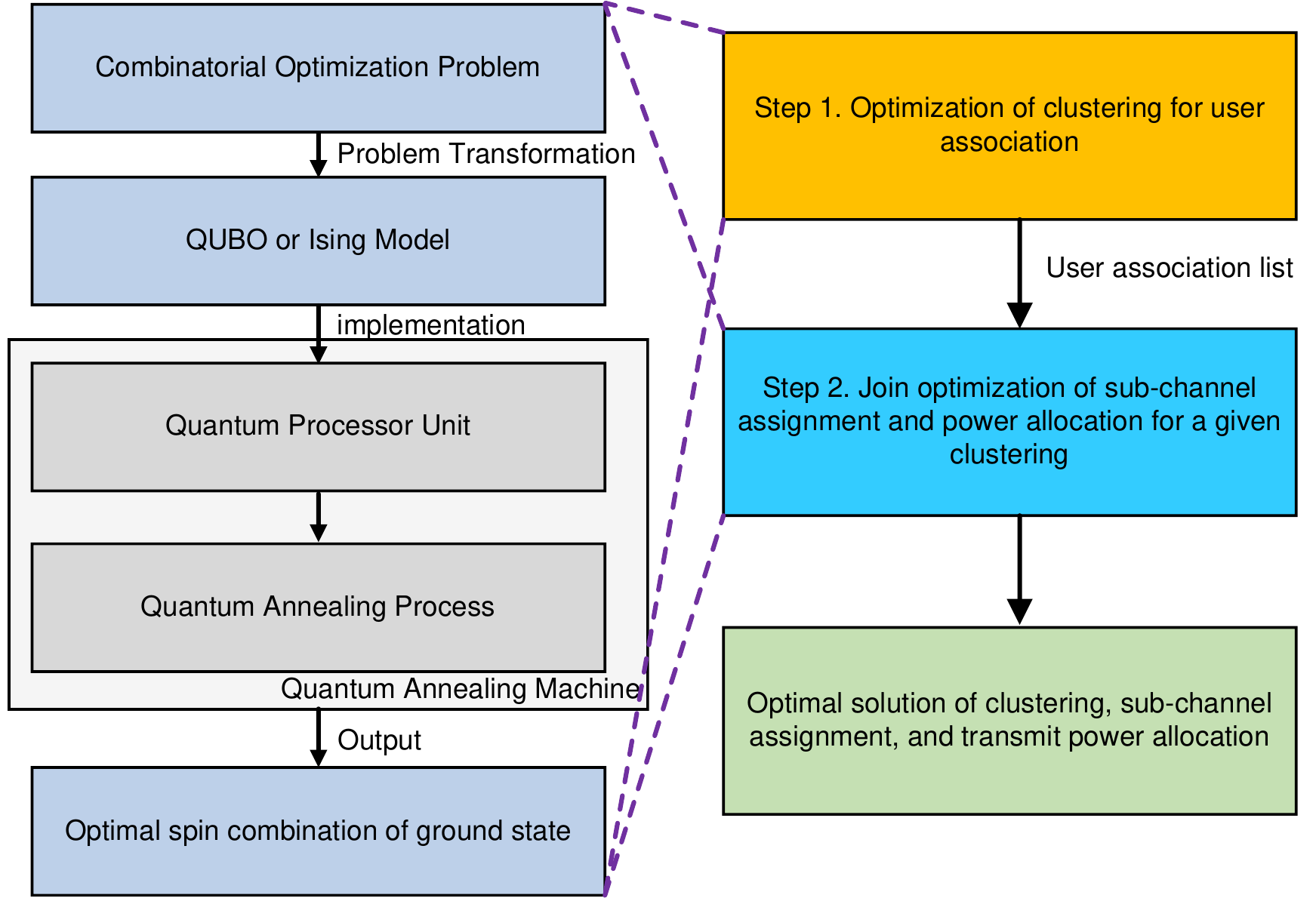}
\caption{The QA-based joint optimization problem of clustering, sub-channel assignment, and transmit power allocation.}
\label{fig:step}
\end{figure}
In this section, we describe how a QA-based optimization algorithm is used to solve the formulated problem. Firstly, we introduce the QUBO and Ising models. Secondly, we describe how to find the ground state of the energy function on the QAM. Then, we divide the problem into two sub-problems, as shown in Fig.~\ref{fig:step}: 1) the QA-based user clustering problem and 2) the joint optimization of sub-channel assignment and power allocation for a given clustering configuration. We obtain an approximately optimal solution for each optimization problem by solving these sub-problems in order, which are summarized below in Algorithm \ref{alg:PoEG2} and Algorithm \ref{alg:PoEG1}. Moreover, we convert each sub-problem into QUBO models, and we present a novel method to derive optimal scaling parameters to solve the QUBO models using a QAM.

\subsection{The QUBO and Ising Models}
A QAM uses the QUBO or Ising Hamiltonian of the Ising model to solve combinatorial optimization problems with a wide range of applications in computer science. The QUBO model can be expressed by the following optimization problem \cite{zaman2021pyqubo}.
\begin{equation}\label{eq:qubo_model}
\underset{\boldsymbol{x}}{\text{minimize}} \ H_{\text{QUBO}}(\boldsymbol{x}) = \boldsymbol{x} ^{T} \boldsymbol{Q} \boldsymbol{x},
\end{equation}
where $\boldsymbol{x}$ denotes a vector of binary decision variables, $\boldsymbol{x}^{T}$ represents the transpose of $\boldsymbol{x}$, and $\boldsymbol{Q}$ is an upper diagonal matrix.  
The QUBO model can also represent the binary combinatorial optimization problem with linear and quadratic terms. Considering the QUBO model with $N \times N$ binary vector in two dimensions, the QUBO model can be expressed as
\begin{equation}
\begin{split}
\label{QUBO}
H_{\text{QUBO}}(\boldsymbol{x})=\sum_{i=1}^{N}Q_{ii}x_i+\sum_{i=1}^{N}\sum_{j=1}^{N}Q_{ij}x_ix_j,
\end{split}
\end{equation}
where $x_{i} \in \{0,1\}$ is the $i$-th binary varible, $Q_{ii} \in \mathbb{R}$ and $Q_{ij}\in \mathbb{R}$ represent the linear and quadratic coefficients for $1 \leq j \leq i \leq N$, respectively. The QUBO model is computationally equivalent to the physical Ising model. 
By defining the spin decision variable $\sigma_{i} = 2x_{i}-1$, the QUBO model can be converted to the Ising model in the form of an energy function as follows \cite{zaman2021pyqubo}:
\begin{equation}
\begin{split}
\label{Ising}
H_{\text{Ising}}(\sigma)=\sum_{i=1}^{N}h_{i}\sigma_i+\sum_{i=1}^{N}\sum_{j=1}^{N}J_{ij}\sigma_i\sigma_j,
\end{split}
\end{equation}
where $h_{i}$ is the influence of the external magnetic field on $\sigma_i$, $J_{ij}$ is the interaction coefficient between $\sigma_i$ and $\sigma_j$ for $1 \leq j \leq i \leq N$. In Eq. \eqref{Ising}, the coefficients of the Ising Hamiltonian are given by $h_{i}=\frac{Q_{ii}}{2}+\sum_{j=1}^{N}\frac{Q_{ij}}{2} \ \forall i$ and $J_{ij}=\frac{Q_{ij}}{4} \ \forall i,j$. The QAM searches for the ground state of each spin $\sigma_{i}$ to minimize the energy function. 

To solve problem \eqref{eq:prob} using QAM, we aim to convert the objective function and constraints C1, C3, and C5 in problem \eqref{eq:prob} into the following QUBO form \cite{zaman2021pyqubo}:
\begin{equation}\label{eq:QUBO}
H_{\text{QUBO}} = H_{\text{cost}} + \lambda _{C} H_{\text{C}},
\end{equation}
where $H_{\text{cost}}$ and $H_{\text{C}}$ represent the objective function and penalty terms for the constraints in the original problem, and $\lambda_{C}$ denotes penalty parameters. 

\subsection{Quantum Annealing-Based Approach}
% Quantum annealing method 
% QA searches the quantum-mechanical superposition of all possible states to find the energy's ground state.
Quantum mechanical systems evolve over time according to the Schr\"{o}dinger equation \cite{doi:10.1126/science.1057726}
\begin{equation}
\begin{split}
\label{schr}
i\frac{d}{dt}|\psi(t)\rangle=H(t)|\psi(t)\rangle,
\end{split}
\end{equation}
where $|\psi(t)\rangle$ is the state vector of the quantum system, $H(t)$ is the time-dependent Hamiltonian, and $|\psi(0)\rangle$ is an initial state. The quantum system starts at $t=0$ in the ground state of $H(0)$, i.e., $|\psi_g(0)\rangle$. At time $T$, the state $|\psi(T)\rangle$ means the answer of the problem for $0\leq t \leq T$, where $T$ is the running time of the algorithm.

The quantum annealing process in Fig.~\ref{fig:step} makes the Hamiltonian evolve slowly enough, which is called an adiabatic process where the adiabatic quantum computation occurs \cite{RevModPhys.90.015002}. We take the advantage of the quantum adiabatic theorem considering a QAM of \textit{D-Wave}, for which the Hamiltonian is represented as %For the \textit{D-Wave} QAD, 
\begin{equation}
\begin{split}\label{Dwave}
H(t) := \left(1-\frac{t}{T} \right)H(0) + \frac{t}{T}H_f, \ 0\leq t \leq T,
\end{split}
\end{equation}
where $t$ is a time instant and $T$ is the total period of the anneal process. On the right side of Eq. \eqref{Dwave}, the first term is the initial Hamiltonian, and the second term is the final Hamiltonian. As the system is annealed, the initial Hamiltonian decreases and the final Hamiltonian increases until $T$. At the end of the process, the Hamiltonian contains the only final Hamiltonian term. The total period $T$ should be large enough in order to satisfy the condition for adiabaticity \cite{RevModPhys.90.015002}. Then the final state of the qubits represents a low-energy solution. 
More specifically, both of two Hamiltonians are given as follows \cite{7782986}, respectively.
\begin{equation}
\begin{split}
\label{Hamil0}
H(0) := -\sum_{i=1}^{n}\sigma_i^x, \; |\psi_{init}\rangle := \frac{1}{\sqrt{2^{n}}}\sum_{z\in\{0,1\}}|z\rangle,
\end{split}
\end{equation}
\begin{equation}
\begin{split}
\label{Hamilf}
H_{f} := -\sum_{z\in\{0,1\}}^{n} f(z)|z\rangle\langle z|,
\end{split}
\end{equation}
where \eqref{Hamil0} is the initial Hamiltonian, \eqref{Hamilf} is the final Hamiltonian. The state vector of the quantum system evolves in a Hilbert space of dimension $2^{n}$. This $n$ qubit Hilbert space can be realized as a system of $n$ spin particles, where $|z_{i}=0\rangle$ corresponds to the $i$-th spin up in the $z$-direction and $|z_{i}=1\rangle$ corresponds to the $i$-th spin down in the $z$-direction. The adiabatic theorem ensures that the state $|\psi(T)\rangle$ at the end of the evolution has the ground state of $H_f$. Therefore, by the adiabatic theorem, we can take a list of qubit states corresponding to an eigen-state and the eigen-energy of the objective. Conventionally, the list of qubit states means the solution corresponding to the problem. Therefore, the list of qubit states is the solution of joint optimization of clustering, sub-channel assignment, and transmit power allocation. %The list of qubit states is the optimal spin combination of ground state which means the optimal solution of joint optimization of clustering, sub-channel assignment, and transmit power allocation.

\subsection{QA-based User Clustering}\label{subsec:user_clustering}
In this subsection, we propose a QA-based clustering algorithm based on the $K$-means method \cite{selim1984k} to divide all GUs into $C$ clusters. In this work, we assume that each GU can only be associated with one BS in each time slot, and each UAV serves a group of GUs that are located in proximity of each other. As a result, we can effectively group GUs into $C$ clusters by using the $K$-means clustering method owing to low implementation complexity. We also note that such a solution has been widely adopted in previous studies \cite{8700188, 9121255, AHMAD2007503}.
% Most of the previous works used this method \cite{8700188, 9121255} adopts the $K$-means clustering method \cite{AHMAD2007503}, which can effectively group the GUs into $C$ clusters with low implementation complexity. 
However, the clustering problem has been proven NP-hard \cite{mahajan2012planar}, which makes it difficult to obtain the optimal solution. To tackle the NP-hard issue of the problem, we propose a QA-based clustering algorithm, which yields the optimal solution to the clustering problem.
%However, the $K$-means clustering method only considers the communication distance between the GUs and the corresponding UAV. Therefore, it is difficult to obtain the optimal solution. In the following, we will propose a novel user association method by exploiting the QA algorithm, which yields the optimal solution of the user association problem.
Similar to \cite{9121255}, we assume that the location of each UAV is fixed at the center of each cluster and that the number of clusters is the same as the number of UAVs. By using the user association variable $s_{m,n}$ and the distance between GUs and UAVs $d_{m,n}$, the clustering problem is formulated as follows:

\begin{equation}\label{eq:prob3}
\begin{aligned}
\underset{\boldsymbol{S}}{\text{minimize}} \ 
& \sum\limits _{m\in \mathcal{M}}\sum\limits _{n\in \mathcal{N}} s_{m,n}d_{m,n}\\
\text{s. t.} \quad
& \text{C3:} \ \sum _{m\in \mathcal{M}} s_{m,n} = 1,\ \forall n\in \mathcal{N} ,\\
& \text{C4:} \ s_{m,n} \in \{0,1\} ,\ \forall m \in \mathcal{M}, \forall n \in \mathcal{N} .
\end{aligned}
\end{equation}

To solve the problem using quantum annealing, we first define a new binary variable $X_{m,n}$, which indicates that GU $n$ is associated with the UAV $m$ if $X_{m,n} = 1$; otherwise, $X_{m,n} = 0$.
In addition, each GU can only be associated with one UAV, constraint C3 can be transformed to 
\begin{equation}
    \sum _{m \in \mathcal{M}} X_{m,n} \leq 1,\ \forall n\in \mathcal{N}.
\end{equation}
Let $\boldsymbol{X} = \{X_{m,n}\} \in {\{0,1\}}^{M\times N}$ denote the user association vector. Thus, problem \eqref{eq:prob3} can be reformulated as
\begin{equation}\label{eq:refo3}
\begin{aligned}
\underset{\boldsymbol{X}}{\text{minimize}} \ 
&\sum\limits _{m\in \mathcal{M}}\sum\limits _{n\in \mathcal{N}} X_{m,n}d_{m,n}\\
\text{s. t.} \quad
& \text{C7:} \ \sum _{m \in \mathcal{M}} X_{m,n} \leq 1,\ \forall n\in \mathcal{N} ,\\
& \text{C8:} \ X_{m,n} \in \{0,1\} ,\ \forall m \in \mathcal{M}, n \in \mathcal{N}. 
\end{aligned}
\end{equation}
Problem \eqref{eq:refo3} is also defined as a combinatorial optimization problem. Therefore, we can transform the problem \eqref{eq:refo3} into the following QUBO model:
% \begin{equation}\label{eq:QUBOH10}
% H_{\text{QUBO}} = \sum\limits _{m\in \mathcal{M}}\sum\limits _{n\in \mathcal{N}}X_{m,n}d_{m,n} + \lambda _{p} H_{\text{C7}},
% \end{equation}
\begin{equation}\label{eq:QUBOH10}
H_{\text{QUBO}} = H_{cost} + \lambda _{p} H_{\text{C7}},
\end{equation}
where $H_{cost}=\sum\limits _{m\in \mathcal{M}}\sum\limits _{n\in \mathcal{N}}X_{m,n}d_{m,n}$, and $H_{\text{C7}}$ represents the penalty term for constraint C7, which is defined as follows:
\begin{equation}\label{Const10b}
H_{\text{C7}} = \sum\limits _{n\in \mathcal{N}}\left(\sum\limits _{m\in \mathcal{M}} X_{m,n}-1\right)^2.
\end{equation}
In this model, $\lambda_{p}$ denotes a penalty factor that is used to adjust the degree of influence.

%By inserting this QUBO model into QAM, we can implement the QUBO model by using the constrained quadratic model (CQM) provided from \textit{D-Wave} into the quantum processor unit (QPU) for matching to the real physical system. Finally, the QAM yields the optimal spin combination to minimize inserted QUBO model. Moreover, to achieve this, we need to adjust the penalty factor to obtain according to the global solution with a feasible set of the QUBO model. It is noteworthy that the CQM solver adjusts automatically the penalty factor. It means we don't need to adjust the penalty factor on implement process. However, to implement the QUBO model into the physical system, the penalty factor should be adjusted in the case of a conventional QPU solver which is not CQM. %Accordingly, the complete steps of the QA-based clustering algorithm are presented in Algorithm 1.
%The process of adjusting the penalty factor is updated and repeated several times. %Up to the time of the result is the real ground state with a feasible set of QUBO model. 
%Finally, when the QAM yields the real ground state with a feasible set of QUBO model, then the penalty term becomes zero.

Inserting the QUBO model into the QAM allows us to implement the model using the Constrained Quadratic Model (CQM) provided by \textit{D-Wave}. This implementation takes place in the Quantum Processing Unit (QPU) to match the QUBO model to the physical system accurately. The QAM ultimately produces the optimal spin combination to minimize the inserted QUBO model. Adjusting the penalty factor is necessary to obtain the global solution with a feasible set of the QUBO model. Notably, the penalty factor is automatically adjusted by the CQM solver during the implementation process, eliminating the need for manual adjustments. However, when implementing the QUBO model in a conventional QPU solver that does not utilize CQM, the penalty factor must be adjusted. This adjustment process is iteratively repeated several times. Eventually, the QAM produces the true ground state with a feasible set of the QUBO model, resulting in the penalty term becoming zero.

%In the following, we propose the range of penalty factor, which $\lambda_{p}$ is not unique; therefore, many different values can be successfully utilized \cite{https://doi.org/10.48550/arxiv.1811.11538}. The penalty term can overwhelm the objective function information if the penalty factor is too large. However, searching for feasible solutions is difficult if the penalty factor is too small.

In the following, we present a suggested range for the penalty factor $\lambda_{p}$. It's important to note that this factor is not unique, and there are multiple values that can be effectively used \cite{https://doi.org/10.48550/arxiv.1811.11538}. If the penalty factor is excessively large, the penalty term may overpower the objective function information. Conversely, if the penalty factor is too small, searching for feasible solutions becomes challenging. Therefore, striking a balance within the suggested range is crucial for achieving effective results.

\begin{lemma}
    Given the penalty term $H_{\text{C7}}$ from \eqref{eq:QUBOH10}, and assuming $H_{\text{C7}} \leq (M-1)^2N$, with $H'_{cost}=\sum\limits _{m\in \mathcal{M}}\sum\limits _{n\in \mathcal{N}}X'_{m,n}d_{m,n} \ \forall X'_{m,n} \neq X_{m,n}$ and $X^*$ denotes the optimal solution while $\mathcal{G}$ denotes the infeasible solution space, we obtain appropriate penalty factor $\lambda_{p}$ in the following range:
    \begin{equation}\label{eq:lambdapppppp}
\lambda_{p} \in \left[\underset{X_{m,n} \in \mathcal{G}}{\text{max}}\left(\frac{H_{cost}'-H_{cost}}{H_{C7}}\right)\\,\underset{X_{m,n}\in \mathcal{G}}{\text{max}} H_{cost}\right].
\end{equation}
\end{lemma}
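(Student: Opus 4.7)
My plan is to obtain the two endpoints of the interval \eqref{eq:lambdapppppp} by exploiting the structure $H_{\text{QUBO}}=H_{cost}+\lambda_p H_{\text{C7}}$ together with the defining property of $\mathcal{G}$, namely $H_{\text{C7}}>0$ on $\mathcal{G}$ and $H_{\text{C7}}=0$ at every feasible point. To obtain the lower bound I would require that the QUBO ground state coincides with the feasible optimum $X^*$, i.e.\ $H_{\text{QUBO}}(X^*)\le H_{\text{QUBO}}(X')$ for every infeasible $X'\in\mathcal{G}$. Because $H_{\text{C7}}(X^*)=0$, this reduces to $H_{cost}(X^*)\le H'_{cost}+\lambda_p H_{\text{C7}}(X')$, which I would solve for $\lambda_p$ and then maximise the resulting ratio over $\mathcal{G}$ to extract the tightest necessary condition. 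The sign convention of the stated ratio $(H'_{cost}-H_{cost})/H_{\text{C7}}$ would be reconciled with this derivation by identifying $H_{cost}$ inside the ratio with $H_{cost}(X^*)$ and by noting that the maximum over $\mathcal{G}$ is binding only at those infeasible $X'$ that can actually upset the ordering, namely the ones whose cost is below the feasible optimum.

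Next, for the upper bound I would invoke the assumption $H_{\text{C7}}\le (M-1)^2N$, which caps how badly the constraints can ever be violated. If $\lambda_p$ is allowed to grow beyond $\max_{X\in\mathcal{G}}H_{cost}$, then the penalty term dominates even mild infeasibility, and on an analog QAM this collapses the energy separations between genuine low-cost candidates into the noise floor, sabotaging the ground-state search. I would formalise this by showing that taking $\lambda_p=\max_{X\in\mathcal{G}}H_{cost}$ equates the maximal available cost spread to one unit of penalty, so that any larger value renders the $H_{cost}$ contribution negligible in the rescaled Hamiltonian while any smaller value still preserves the lower-bound feasibility guarantee. Combining both endpoints yields \eqref{eq:lambdapppppp}.

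The main obstacle will be the upper bound, which, unlike the lower bound, is not a strict logical consequence of feasibility preservation but a scaling prescription needed for numerical well-conditioning on the QAM. I would pair it with an auxiliary observation that the hardware effectively minimises the Hamiltonian after rescaling by $\|H_{\text{QUBO}}\|_{\infty}$, so that an excessively large $\lambda_p$ reduces $H_{cost}$ to an imperceptible perturbation; this pins down $\max_{X\in\mathcal{G}}H_{cost}$ as the natural ceiling. A secondary subtlety, namely the precise role of $H_{cost}$ versus $H'_{cost}$ inside the lower-bound ratio, would be clarified at the outset of the formal write-up, since it affects the sign of individual terms without altering the overall structure of the argument.
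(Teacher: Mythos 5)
Your proposal follows essentially the same route as the paper: the lower endpoint is obtained exactly as in the paper's proof, by demanding $H_{\text{QUBO}}(X^*) < H_{cost}+\lambda_p H_{\text{C7}}$ for every infeasible point (using $H_{\text{C7}}(X^*)=0$), solving for $\lambda_p$, and maximizing over $\mathcal{G}$, while the upper endpoint is, in the paper as in your write-up, not derived rigorously but taken from the standard ``penalty should not exceed the maximal objective value or it swamps $H_{cost}$'' argument (the paper simply cites Verma--Lewis for it). Your noted subtlety about which of $H_{cost}$ and $H'_{cost}$ plays the role of the optimal versus the infeasible cost mirrors an ambiguity present in the paper's own inequality chain, so it is a fair point to clarify but not a deviation from the paper's argument.
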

\begin{proof}
    From \eqref{eq:QUBOH10}, the penalty term $H_{\text{C7}}$ increases according to the degree of constraint violation as follows:
\begin{equation}\label{eq:lambdapp}
H_{QUBO}({X}^*) < H_{cost} +\lambda_{p}H_{C7}, \ \forall X_{m,n} \in \mathcal{G},
\end{equation}
Therefore, we can obtain a valid lower bound for the penalty factor as
% $\lambda_{p}> \underset{X_{m,n}}{\text{max}}\frac{H_{QUBO}(X^*)-H_{cost}}{H_{C7}} > \underset{X_{m,n}}{\text{max}}\left(\frac{H_{cost}'-H_{cost}}{H_{C7}}\right)$, 
\begin{equation}
    \begin{aligned}
        \lambda_{p} &> \underset{X_{m,n}}{\text{max}}\frac{H_{QUBO}(X^*)-H_{cost}}{H_{C7}} \\
        & > \underset{X_{m,n}}{\text{max}}\left(\frac{H_{cost}'-H_{cost}}{H_{C7}}\right),
    \end{aligned}
\end{equation}
Furthermore, a relevant study \cite{VERMA2022100594} provides insights into determining the upper bound of $\lambda_{p}$.
Thus, we can determine the range for $\lambda_{p}$ as given by equation \ref{eq:lambdapppppp}
\end{proof}

The computational complexity associated with obtaining the optimal range for the penalty factor is $\mathcal{O}(2^{MN})$. Ultimately, we can employ a local search method that utilizes an f-flip neighborhood to determine the penalty factor \cite{VERMA2022100594}. The proposed QA-based user clustering method is described in Algorithm \ref{alg:PoEG2}.

\begin{algorithm}[t]
\label{alg:PoEG2}
    \caption{The QA-Based Clustering Algorithm with CQM in \textit{D-Wave}}
    \SetAlgoLined
    \SetKwInOut{Input}{Input}
    \SetKwInOut{Output}{Output}
    % \DontPrintSemicolon
    
    \Input{Locations of GUs and UAVs.}
    
    \Output{The user association vector}
    Build the CQM object
    
    Define the binary variable $X_{m,n}$        
    \For  {$m \in \mathcal{M}$} {
        \For {$n \in \mathcal{N}$}{
            $H_{cost} \gets X_{m,n}d_{m,n}$  
        }
    }       

    $H_{QUBO} \gets H_{cost} + \lambda_{p}H_{C7}$

    Run the CQM sampler
    
    %Until QAM output the feasible solution
%   Until Hamiltonian energy achieves the ground state    
    Return the optimal spin combination $X_{m,n}^{*}$   
    
\end{algorithm}

\subsection{Joint Optimization of Sub-channel Assignment and Power Allocation for a Given Clustering Configuration}
In this subsection, we present the QA-based sub-channel assignment and power allocation algorithm under the given clustering configuration. Most of the previous works adopt classical optimization methods, such as successive convex approximation \cite{10013035} and Karush-Kuhn-Tucker \cite{7873307}, to assign the sub-channels and allocate the transmit power to the UAVs with low complexity. However, using such methods to obtain the optimal solutions for non-convex problems takes time and effort. In the following, we propose a novel method that leverages the QA algorithm for sub-channel assignment and power allocation. This approach enables us to achieve the optimal solution for the optimization problem.

The problem is formulated as follows:
\begin{equation}\label{eq:prob4}
\begin{aligned}
\underset{\boldsymbol{\beta},\boldsymbol{P}}{\text{maximize}} \ 
& \sum\limits _{m\in \mathcal{M}}\sum\limits _{n\in \mathcal{N}}\sum\limits _{k\in \mathcal{K}} \sum\limits _{l\in \mathcal{L}} R_{m,n}^{k,l}\\
\text{s. t.} \quad
& \text{C1:} \ \sum _{k\in \mathcal{K}} \beta ^{k}_{m} \leq 1,\ \forall m\in \mathcal{M} ,\\
& \text{C2:} \ \beta ^{k}_{m} \in \{0,1\} ,\ \forall m \in \mathcal{M}, \forall k \in \mathcal{K} ,\\
& \text{C5:} \ \sum _{l \in \mathcal{L}} p_{m}^{l} \leq 1,\ \forall m \in \mathcal{M} ,\\
& \text{C6:} \  p^{l}_{m} \in \{0,1\} ,\ \forall m \in \mathcal{M}, \forall l \in \mathcal{L}. 
\end{aligned}
\end{equation}

To solve this problem, we first define a new binary variable $X_{m,k,l}$, which indicates that UAV $m$ is associated with channel $k$ and transmit power $P_m^l$ if $X_{m,k,l} = 1$; otherwise, $X_{m,k,l}=0$. The SINR for the communication link between UAV $m$ and GU $n$ on sub-channel $k$ with transmit power level $l$ can be rewritten as
\begin{equation}\label{eq:sir_new}
\gamma_{m,n}^{k,l}=\frac {\left |g_{m,n}\right |^2 P_{m}^{l} X_{m,k,l}}{\displaystyle \sum\limits_{\substack{m'\in \mathcal{M} \\m'\neq m}} \sum\limits_{l' \in \mathcal{L}} \left |g_{m',n}\right |^2 P_{m'}^{l'} X_{m',k,l'}+\left |Z_0\right |^2}.
\end{equation}
We then introduce Kronecker delta $\delta_{m,m'}$ and $\delta_{k,k'}$, where $\delta_{i,j}$ is a piecewise function of two variables $i$ and $j$, i.e., $\delta_{i,j} = 1$ if $i=j$, and $\delta_{i,j} = 0$ otherwise. Thus, the received signal power $S$ and interference $I$ can be recast as
\begin{equation}
\begin{split}\label{rece_S}
\text{S} =& s_{m,n} \left|g_{m,n}\right|^2 P_{m}^{l} X_{m,k,l},
\end{split}
\end{equation}
% \begin{equation}
% \begin{split}\label{rece_I}
% \text{I} =& \sum\limits _{m'\in \mathcal{M}}\sum\limits _{k'\in \mathcal{K}} \sum\limits _{l'\in \mathcal{L}}s_{m,n}\left|g_{m',n}\right|^2 P_{m'}^{l'} X_{m',k',l'}\delta_{k,k'}(1-\delta_{m,m'}),
% \end{split}
% \end{equation}
\begin{equation}
\begin{split}\label{rece_I}
\text{I} =& \sum\limits _{m'}\sum\limits _{k'} \sum\limits _{l'}s_{m,n}\left|g_{m',n}\right|^2 P_{m'}^{l'} X_{m',k',l'}\delta_{k,k'}(1-\delta_{m,m'}),
\end{split}
\end{equation}
where $\delta_{m,m'}$ and $\delta_{k,k'}$ denote the interference occurred to GU $n$ from other UAVs sharing the same sub-channel $k$. 

Since each UAV can only be assigned one sub-channel and one power level, constraints C1 and C5 can be transformed to the following constraint:
% \begin{equation}
% \begin{split}
% \label{Const9b}
% \sum\limits _{m\in \mathcal{M}}\left(\sum\limits _{k\in \mathcal{K}} \sum\limits _{l\in \mathcal{L}}X_{m,k,l}-1\right)^2.
% \end{split}
% \end{equation}
\begin{equation}
    \sum _{k \in \mathcal{K}} \sum _{l \in \mathcal{L}} X_{m,k,l} \leq 1,\ \forall m\in \mathcal{M}.
\end{equation}

Please note that the objective function of problem \eqref{eq:prob4} is differentiable and can be transformed into polynomials using Taylor series approximation. However, the QUBO model is a binary combinatorial optimization problem, which make it challenging to directly translate the objective into a QUBO -formulation. 
%{\color{red}To address this issue, we initially employ a linear approximation around zero-point \cite{9825802}, specifically using the approximation $\log(1+x) \leq x$ where $x$ is a small positive number}.% 
{To address this issue, we use the inequality $\log(1+x) \leq x$.} Subsequently, we transform the approximated problem into the QUBO model. For problem \eqref{eq:prob4}, it is approximated as the following problem in the vicinity of the zero-point.
\begin{equation}\label{eq:prob5}
\begin{aligned}
\underset{\boldsymbol{\beta},\boldsymbol{P}}{\text{maximize}} \ 
& \sum\limits _{m\in \mathcal{M}}\sum\limits _{n\in \mathcal{N}}\sum\limits _{k\in \mathcal{K}} \sum\limits _{l\in \mathcal{L}} \gamma _{m,n}^{k,l}\\
\text{s. t.} \quad
& \text{C1:} \ \sum _{k\in \mathcal{K}} \beta ^{k}_{m} \leq 1,\ \forall m\in \mathcal{M} ,\\
& \text{C2:} \ \beta ^{k}_{m} \in \{0,1\} ,\ \forall m \in \mathcal{M}, \forall k \in \mathcal{K} ,\\
& \text{C5:} \ \sum _{l \in \mathcal{L}} p_{m}^{l} \leq 1,\ \forall m \in \mathcal{M} ,\\
& \text{C6:} \  p^{l}_{m} \in \{0,1\} ,\ \forall m \in \mathcal{M}, \forall l \in \mathcal{L}. 
\end{aligned}
\end{equation}
Nevertheless, it is important to note that the optimal solution to the original problem \eqref{eq:prob4} may differ from the optimal solution to the approximated problem \eqref{eq:prob5}. In order to obtain the same optimal solution for both problems, we propose to use the MILFP method, which is detailed in subsection \ref{MILFPsection}. The MILFP approach is capable of determining the optimal solution around the zero-point.

To make problem \eqref{eq:prob5} solvable by using QAM, the next step is to convert the problem into a QUBO model. By denoting $\boldsymbol{X} = \{X_{m,k,l}\} \in {\{0,1\}}^{M\times K\times L}$ as the sub-channel assignment and power allocation vector, problem \eqref{eq:prob5} can be reformulated as
\begin{equation}\label{eq:refo1}
\begin{aligned}
\underset{\boldsymbol{X}}{\text{minimize}} \ 
&-\sum\limits _{m\in \mathcal{M}}\sum\limits _{n\in \mathcal{N}}\sum\limits _{k\in \mathcal{K}} \sum\limits _{l\in \mathcal{L}} \frac {\text{S}}{\text{I}+\left|Z_0\right|^2}\\
\text{s. t.} \quad
& \text{C9:} \ \sum _{k \in \mathcal{K}} \sum _{l \in \mathcal{L}} X_{m,k,l} \leq 1,\ \forall m\in \mathcal{M} ,\\
& \text{C10:} \ X_{m,k,l} \in \{0,1\} ,\\
& \quad \quad \forall m \in \mathcal{M}, k \in \mathcal{K} , l \in \mathcal{L}. 
\end{aligned}
\end{equation}
%where $\text{S}$ and $\text{I}$ mean the received signal power \eqref{rece_S} and the interference signal power \eqref{rece_I}, respectively. 

%It is worth noting that the noise power $\left|Z_{0}\right|^2$ is negligible in the QUBO model because it is a constant that is not a parametric element. For example, $\frac{S}{I+|Z_0|^2} = \frac{S}{|Z_0|^2}$, if no interference exists; otherwise, $\frac{S}{I+|Z_0|^2}<\frac{S}{I}$. Therefore, we focus on minimizing $-\frac{S}{I}$ ratio, and problem \eqref{eq:refo1} is equivalent to the following problem \cite{9065239}:

It is worth noting that the noise power $\left|Z_{0}\right|^2$ in the objective function is a constant and not affected by any control variables. Therefore, we focus on minimizing the $-\frac{S}{I}$ ratio rather than the  $-\frac{S}{I+|Z_0|^2}$ without losing generality. Thus, problem \eqref{eq:refo1} is equivalently recast as

\begin{equation}\label{eq:refo16}
\begin{aligned}
\underset{\boldsymbol{X}}{\text{minimize}} \ 
&-\sum\limits _{m\in \mathcal{M}}\sum\limits _{n\in \mathcal{N}}\sum\limits _{k\in \mathcal{K}} \sum\limits _{l\in \mathcal{L}}\text{S} +\lambda _{\text{I}}\sum\limits _{n\in \mathcal{N}}\text{I}\\
\text{s. t.} \quad
& \quad \text{C9, C10}. 
\end{aligned}
\end{equation}
Consequently, the original problem can be converted into the following QUBO model:
\begin{equation}\label{eq:QUBOH}
% H_{\text{QUBO}} = -\sum\limits _{m\in \mathcal{M}}\sum\limits _{n\in \mathcal{N}}\sum\limits _{k\in \mathcal{K}} \sum\limits _{l\in \mathcal{L}}\text{S} +\lambda _{\text{I}}\sum\limits _{n\in \mathcal{N}}\text{I} + \lambda _{p2} H_{\text{C9}},
H_{\text{QUBO}} = H_{cost2} + \lambda _{p2} H_{\text{C9}},
\end{equation}
where $H_{cost2} = -\sum\limits _{m\in \mathcal{M}}\sum\limits _{n\in \mathcal{N}}\sum\limits _{k\in \mathcal{K}} \sum\limits _{l\in \mathcal{L}}\text{S} +\lambda _{\text{I}}\sum\limits _{n\in \mathcal{N}}\text{I}$, and $H_{\text{C9}}$ represents the penalty term for constraint C9, which is defined as follows:
\begin{equation}\label{Const9b}
H_{\text{C9}} = \sum\limits _{m\in \mathcal{M}}\left(\sum\limits _{k\in \mathcal{K}} \sum\limits _{l\in \mathcal{L}}X_{m,k,l}-1\right)^2.
\end{equation}
In this model, $\lambda_{\text{I}}$ is a scaling parameter that can be derived by a parametric algorithm for MILFP. The parameter $\lambda_{p2}$ denotes the penalty factor, which adjusts the degree of influence. Again, to effectively solve the problem using a QAM, it is essential to set appropriate values for $\lambda_{\text{I}}$ and $\lambda_{p}$. Hence, in the following subsection, we present the proposed methods to derive $\lambda_{\text{I}}$ and $\lambda_{p}$.

\subsection{Deriving The Optimal Scaling Parameter and Appropriate Penalty Factor}\label{MILFPsection}
By setting suitable values for $\lambda_{\text{I}}$ and $\lambda_{p}$ in the QAM of \textit{D-Wave}, we can utilize quantum annealing to efficiently search for the lowest energy states, which correspond to optimal or near-optimal solutions for the problem. To derive the scaling parameter, we employ a parametric algorithm for the MILFP problem within the QUBO model \eqref{eq:QUBOH}.

The general form of MILFP can be stated as follows \cite{https://doi.org/10.1002/aic.14185}:%\cite{GRANOT1977221}:
\begin{equation}
\begin{split}
\label{MILFP}
\text{maximize}\left \{Q(x) = \frac{N(x)}{D(x)} \bigg| x \in \mathcal{F} \right \},
\end{split}
\end{equation}
where variables $x$ can be both continuous and discrete, $\mathcal{F}$ is the feasible set, and the denominator function $D(x)$ is always positive, i.e., $D(x) > 0 \ \forall x \in \mathcal{F}$. The functions of the numerator $N(x)$ and denominator $D(x)$ can be linear or nonlinear. We can rewrite the above equation as follows:
\begin{equation}
\begin{split}
\label{MILFP2}
F(q) = \text{maximize}\left \{N(x) - q\cdot D(x)| x \in \mathcal{F} \right \},
\end{split}
\end{equation}
where $q$ is a variable parameter. Note that the optimal solution of the parametric objective function $F(q)$ has only one zero-point \cite{6858622}, which is the same as its global optimal solution. Therefore, the parametric element is derived as $q^{*} = \frac{N(x^{*})}{D(x^{*})} = \text{max} \left \{ \frac{N(x)}{D(x)}| x \in \mathcal{F}\right \}$ if only and if $F(q^{*})=F(q^{*},x^{*})=\text{max}\left \{(N(x)-q^{*}D(x)| x \in \mathcal{F} \right \}=0$ where $x^{*}$ is the global optimal solution. 

By comparing \eqref{MILFP2} and \eqref{eq:QUBOH}, the signal power $S$ and interference $I$ correspond to $N(x)$ and $D(x)$, respectively.
Hence, the optimal scaling parameter can be derived as 
\begin{equation}\label{eq:QUBOH2}
\lambda_{\text{I}} = \frac{\left|g_{m,n}\right|^2 P_{m}^{l}X_{m,k,l}^*}{\sum\limits _{m'}\sum\limits _{k'} \sum\limits _{l'}\left|g_{m',n}\right|^2 P_{m'}^{l'}\delta_{k,k'}(1-\delta_{m,m'})X_{m',k,l'}^*},
\end{equation}
where $X^*$ denotes the optimal solution of the QUBO model.

Using this approach, we can calculate the optimal scaling parameter for the QUBO model.
However, it is not possible to directly set this scaling parameter in the QUBO model because the solution formulation process for the QAM in \textit{D-Wave} is iterative, involving updates and repeatitions. Therefore, we divide $\lambda_{I}$ into a numerator $\lambda_{num}$ and denominator $\lambda_{den}$ to ensure adherence to the QUBO model. The process of calculating the scaling parameter is shown in Algorithm \ref{alg:PoEG1}.
 
\begin{algorithm}[t!]
\label{alg:PoEG1}
    \caption{QA-Based Sub-channel Assignment and Power Allocation}
    \SetAlgoLined
    \SetKwInOut{Input}{Input}
    \SetKwInOut{Output}{Output}
    \DontPrintSemicolon
    \Input{Locations of GUs and UAVs, user association vector.}
    
    \Output{Sub-channel assignment and power allocation decisions.}
     
    Define the binary variable $X_{m,k,l}$ for QAM.

    \textbf{Repeat}
    
    Adjust the penalty factor $\lambda_{p2}$.
        
    \For  {$m \in \mathcal{M}$} {
        \For    {$n \in \mathcal{N}$} {
            \For    {$k \in \mathcal{K}$}{
                \For    {$l \in \mathcal{L}$}{

        Set the temp variable $T \gets 0$

        Set the $\lambda_{den} \gets 0$

        \For    {$m' \in \mathcal{M}$} {

            \If     {$m \neq m'$}{

                \For{$l' \in \mathcal{L}$}{

                    $T \gets |g_{m,n}|^{2}P_{m'}^{l'}X_{m',k,l'}X_{m,k,l}$    

                    \If{$X_{m',k,l'}X_{m,k,l}$}{
                        $\lambda_{den} \gets |g_{m',n}|^{2}P_{m'}^{l'}$
                    }       
                            }
                        }
                    }
        \eIf { $ \lambda_{den} = 0$}{
            
            $H_{cost2} \gets -|g_{m,n}|^2P_{m}^{l}X_{m,k,l}s_{m,n}$
            
        }{
            $\lambda_{num} \gets |g_{m,n}|^2P_{m}^{l}$

            $H_{cost2} \gets -|g_{m,n}|^{2}P_{m}^{l}X_{m,k,l}s_{m,n} + s_{m,n}T\frac{\lambda_{num}}{\lambda_{den}}$
        }
                }
            }
        }
    }        

    $H_{QUBO} \gets H_{cost2} + \lambda_{p2}H_{C9}$

    \textbf{Until} {QAM output the feasible solution}
    
    Return optimal spin combination $X_{m,k,l}^{*}$
    
\end{algorithm}

\begin{lemma}
    Given the penalty term $H_{\text{C9}}$ and $H_{\text{C9}} \leq (KL-1)^2M$ according to \eqref{Const9b} and $X^*$ denotes the optimal solution while $\mathcal{G}$ denotes the infeasible solution space, we obtain appropriate penalty factor $\lambda_{p}$ in the following range:
\begin{equation}\label{eq:lambdappppp2}
\lambda_{p2} \in \left[\underset{X_{m,k,l} \in \mathcal{G}}{\text{max}}\left(\frac{H_{cost2}'-H_{cost2}}{H_{C9}}\right)\\,\underset{X_{m,k,l}\in \mathcal{G}}{\text{max}} H_{cost2}\right]
\end{equation}
\end{lemma}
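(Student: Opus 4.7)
The plan is to mirror the argument for Lemma~1 verbatim, swapping the clustering QUBO \eqref{eq:QUBOH10} for the joint sub-channel assignment and power allocation QUBO \eqref{eq:QUBOH}. First I would verify the stated envelope $H_{C9}\le(KL-1)^2M$ directly from \eqref{Const9b}: for each UAV $m$ the inner quantity $\sum_{k,l}X_{m,k,l}-1$ takes values in $\{-1,0,1,\dots,KL-1\}$, so its square is at most $(KL-1)^2$, and summing over the $M$ UAVs gives the claimed bound, attained when every UAV has all $KL$ binary variables set to one.

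Next I would write the correctness condition that forces the QUBO ground state to be feasible: every infeasible configuration $X\in\mathcal{G}$ must have strictly larger QUBO energy than the feasible optimum $X^*$,
\[
H_{\mathrm{QUBO}}(X^*) < H_{cost2}(X) + \lambda_{p2}\, H_{C9}(X), \quad \forall X \in \mathcal{G}.
\]
Using $H_{C9}(X^*)=0$ on the left-hand side so that $H_{\mathrm{QUBO}}(X^*)=H_{cost2}(X^*)$, then rearranging and taking the supremum over $\mathcal{G}$, yields
\[
\lambda_{p2} > \sup_{X\in\mathcal{G}} \frac{H_{\mathrm{QUBO}}(X^*)-H_{cost2}(X)}{H_{C9}(X)}.
\]
Replacing $H_{\mathrm{QUBO}}(X^*)$ by the auxiliary cost $H'_{cost2}$ evaluated at a non-optimal feasible configuration (mirroring the way Lemma~1 introduces $H'_{cost}$) and taking the maximum over the infeasible set gives the left endpoint of the interval in \eqref{eq:lambdappppp2}. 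For the upper endpoint I would invoke \cite{VERMA2022100594} exactly as in Lemma~1: once $\lambda_{p2}$ exceeds $\max_{X\in\mathcal{G}}H_{cost2}$, the penalty term completely masks the cost variations, so the annealer loses the sensitivity needed to discriminate between competing near-optimal states.

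The main conceptual obstacle is the substitution step that turns the sharp but unknown quantity $H_{\mathrm{QUBO}}(X^*)$ into the computable surrogate $H'_{cost2}$; one has to argue, as in Lemma~1, that this substitution only tightens the inequality, thereby yielding a sufficient (rather than necessary) condition on $\lambda_{p2}$. The main practical obstacle is that evaluating either endpoint exactly requires a maximization over $\mathcal{G}$, whose cardinality is $\mathcal{O}(2^{MKL})$; as already noted before the statement, this is handled in practice by the $f$-flip local search of \cite{VERMA2022100594}. Everything else reduces to the same bookkeeping that produced \eqref{eq:lambdapppppp} for the clustering sub-problem.
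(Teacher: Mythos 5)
Your proposal follows essentially the same route as the paper's proof: impose $H_{\mathrm{QUBO}}(X^*) < H_{cost2} + \lambda_{p2}H_{C9}$ for all infeasible configurations, rearrange to get the lower endpoint with $H'_{cost2}$ substituted for the unknown optimum, and appeal to \cite{VERMA2022100594} (with the $f$-flip local search) for the upper endpoint, exactly as in Lemma~1. Your added verification of the envelope $H_{C9}\le(KL-1)^2M$ and your remark that the surrogate substitution yields a sufficient condition are minor elaborations the paper leaves implicit, not a different argument.
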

\begin{proof}
Regarding the penalty factor $\lambda_{p2}$ in \eqref{eq:QUBOH}, we apply the same method as in subsection \ref{subsec:user_clustering} to determine the appropriate range for this parameter. Specifically, the penalty term $H_{\text{C9}}$ in \eqref{eq:QUBOH} increases according to the degree of constraint violation as follows:
\begin{equation}\label{eq:lambdapppp}
H_{QUBO}({X}^*) < H_{cost2} +\lambda_{p2}H_{C9}, \ \forall X_{m,k,l} \in \mathcal{G},
\end{equation}
Therefore, we can obtain a valid lower bound for the penalty factor as
\begin{equation}
    \begin{aligned}
        \lambda_{p2} & > \underset{X_{m,k,l}}{\text{max}}\frac{H_{QUBO}(X^*)-H_{cost2}}{H_{C9}} \\
        & > \underset{X_{m,k,l}}{\text{max}}\left(\frac{H_{cost2}'-H_{cost2}}{H_{C9}}\right),
    \end{aligned}
\end{equation}
 Finally, an f-flip neighborhood-based search method is used to determine the penalty factor \cite{VERMA2022100594}.
Thus, a valid penalty factor can be chosen as \ref{eq:lambdappppp2}
\end{proof}
The computational complexity associated with obtaining the optimal range for penalty factor $\lambda_{p2}$ is $\mathcal{O}(2^{MKL})$.  Algorithm \ref{alg:PoEG1} outlines the process of QA-based joint optimization of sub-channel assignment and power allocation.

\section{Simulation Results}\label{Simulationresult}
In this section, we provide the simulation results to demonstrate the effectiveness of the proposed QA-based algorithms in a multi-UAV wireless network. More specifically, GUs are assigned to designated UAVs using the QA-based clustering algorithm. The sub-channels and transmit power are determined through the QA-based joint optimization of sub-channel assignment and power allocation scheme.
We assume that the UAVs are uniformly distributed as a regular polygon within a three-dimensional area of $2.5\text{km} \times 2.5\text{km} \times 100 \text{m}$. The preplacement configurations of the UAVs are shown in Fig.~\ref{fig:UAVcase}. 
\renewcommand{\figurename}{{Fig.}}
\begin{figure}[t]
\centering
\includegraphics[width=\linewidth]{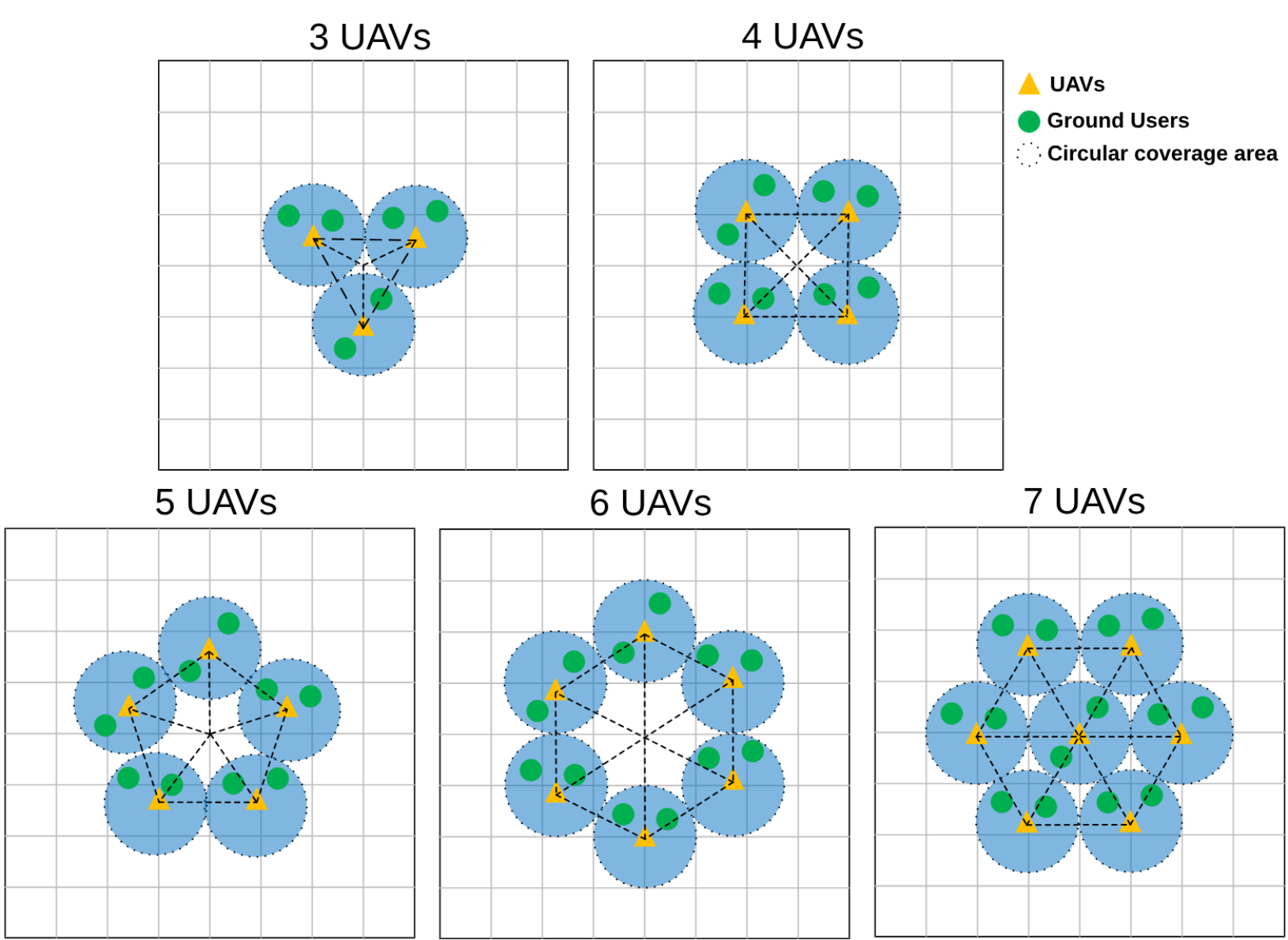}
\caption{Preplacement configuration of the UAVs in a two-dimensional space.}
\label{fig:UAVcase}
\end{figure}
Each UAV has a circular coverage area with a radius of $500\text{m}$ and can serve up to 30 GUs. The GUs are randomly and uniformly located in the network's service area. The simulation utilizes air-to-ground propagation parameters specific to a dense urban environment \cite{6863654}.
To ensure reliable results, the algorithms are run multiple times, and the outcomes are averaged. The simulation parameters are detailed in Table~\ref{table1}.
\begin{table}[t]
	\centering 
	\caption{Simulation Parameters}  
	\label{table1} 
	\begin{tabular}{| c | c|} 
	\hline 
	\textbf{Parameter} & \textbf{Value}\\ 
	\hline
	Carrier frequency & $2$~GHz\\  
	\hline
	Number of GUs & $100$\\ 
	\hline
	Number of UAVs $M$& $\{1, 2, 3, 4, 5, 6, 7\}$\\ 
	\hline
	Number of sub-channels $K$ & $\{2, 3\}$\\
	\hline
	Service area radius, $r_c$ & $500$ m\\
	\hline
        Channel parameter $a$ & $9.6$\\ 
	\hline
        Channel parameter $b$ & $0.16$\\ 
	\hline
	Channel parameter $\eta^{LoS}$ & $1$ dB\\ 
	\hline
        Channel parameter $\eta^{NLoS}$ & $20$ dB\\ 
	\hline
	Transmit power $P_m$ & $\{10, 15, 20, 25, 30\}$ dBm\\ 
        \hline
	Noise power ${|Z_0|}^2$ & $-96$ dBm\\
	\hline
	UAVs' altitude & $100$ m\\

	\hline 
 	QUBO solver & \textit{D-Wave} hybrid solver \cite{dwave}\\
	\hline
	\end{tabular}
\end{table}

To demonstrate the effectiveness of the proposed QA algorithm in large-scale wireless communication networks, we consider two simulation scenarios: 1) only increasing the number of UAVs, and 2) increasing the numbers of the sub-channels and UAVs. We compare our proposed QA algorithm with two benchmarks, namely the Steepest Descent (SD) \cite{10.1162/neco.1992.4.2.141} and Simulated Annealing (SA) \cite{doi:10.1126/science.220.4598.671}.
The SD algorithm, a discrete analogue of gradient descent, determines the best move by using local minimization instead of computing a gradient. It falls under the category of heuristic methods, which may result in suboptimal solutions for certain problems. However, SD has the advantage of often finding local optimal solutions faster than exhaustive search methods. 
The SA algorithm is a metaheuristic approach that approximates global optimization in large search spaces. Inspired by the physical process of annealing, which involves gradually cooling a high-temperature material to achieve an optimal configuration, SA starts with a high temperature and progressively cools down over time. It relies on random search and probabilistic acceptance of new solutions to gradually converge to the optimal solution. Moreover, we also compare the clustering performance of these schemes with the $K$-means++ algorithm \cite{10.5555/1283383.1283494}. These algorithms are implemented by using Python 3.7 on a computer with an NVIDIA RTX 3080 Ti GPU and we run the quantum annealing algorithm on \textit{D-Wave} quantum annealing machines (QAMs).
\subsection{Complexity Analysis}
In this subsection, we analyze the complexity of the proposed algorithm. As stated before, the sum rate maximization problem \eqref{eq:prob} has been divided into two subproblems, which are then transformed into QUBO models. We propose a QA-based algorithm to solve these subproblems sequentially. The computational complexity of the algorithms primarily depends on the CQM solver, which consists of a classical front-end working in conjunction with a quantum back-end. The front-end takes inputs, including an optional time limit $T$, and initiates a set of heuristic solvers running on classical CPUs and GPUs to search for high-quality solutions. Each heuristic classical solver contains a quantum module that formulates and sends quantum queries to a \textit{D-Wave} QPU. The QPU's responses to these queries may be used to improve the quality of a current set of solutions. This combined approach of classical and quantum solution methods working together is known as hybrid acceleration. As a result, defining a precise time complexity is challenging. While this hybrid method is not solely reliant on quantum techniques, the hybrid workflow can achieve better solutions more quickly compared to a purely classical workflow. 

To simplify the analysis of time complexity, we focus on the computation performed by the \textit{D-wave} quantum computer. The time required to execute a single quantum machine instruction on a QPU, known as the qubit-processing-unit access time (QPU access time), consists of the programming time, annealing time, readout time, and delay time \cite{dwave}. {The annealing time is particularly relevant to the time complexity of the problem $(t_f)$, while the remaining factors are more related to overhead considerations. In \cite{Mukherjee_2015}, an approximate evaluation of the time complexity was proposed based on the probability of overcoming the energy barrier of the QA algorithm. Therefore, the time complexity of the proposed QA algorithm can be estimated as $t_f \approx \mathcal{O}(e^{\sqrt{N}})$ where $N$ is the number of qubits used to represent the variables of the optimization problem. Similarly, the authors in \cite{Mukherjee_2015} estimated the time complexity of the Simulated annealing (SA) as $ \mathcal{O}(e^{N})$. In fact, the running time of the SA algorithm also depends on the number of iterations, the number of state changes per loop (temperature), and other factors. 
For classical algorithms such as K-means++ and Steepest Descent (SD), we also provide evaluations of the time complexity of these algorithms. Specifically: for the K-means++ algorithm, the time complexity can be estimated as 
$\mathcal{O}(NTK)$; for the SD algorithm,  $t_f \approx \mathcal{O}(NGD)  $, and for the SA algorithm, $t_f \approx \mathcal{O}(e^{N})$,  where $T$ is the number of iterations until convergence, 
 $K$ is the number of clusters, and 
 $G$ is the time to calculate the gradient. For the SD method, the number of iterations depends on specific conditions of the algorithm, such as the gradient of the objective function, learning rate, and required precision, so the time complexity of this algorithm varies depending on the actual problem. However, it should be noted that the complexity of the proposed algorithm is being considered for execution on a quantum computer, which is different from the classical algorithms, and the efficiency of the proposed algorithm is demonstrated in the simulation results provided below.  }

\subsection{Performance Analysis}
In this subsection, we analyze the performance of the proposed algorithm.
\renewcommand{\figurename}{{Fig.}}
\begin{figure}[t]%
\subfigure[Solution vs. Energy.]{{\includegraphics[width=0.5\textwidth]{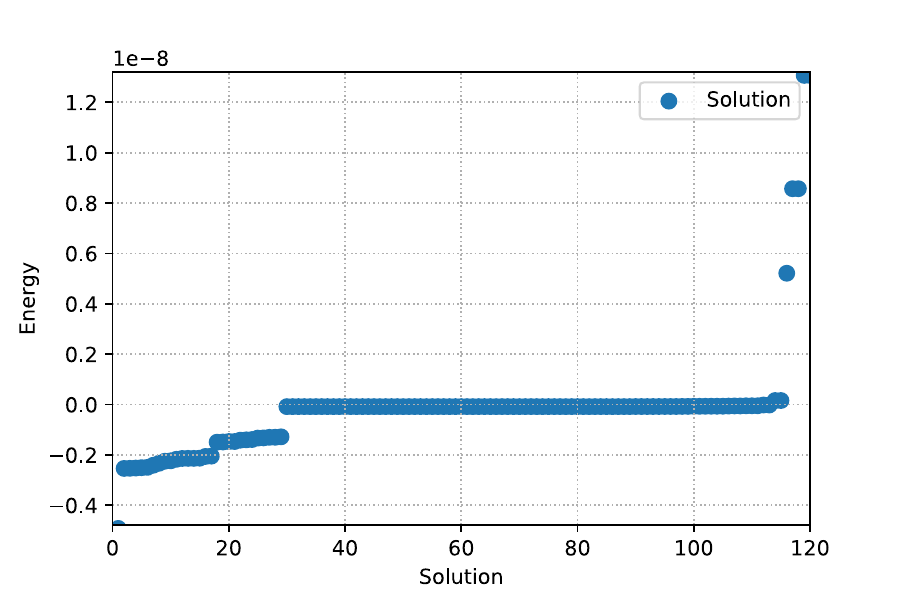} }}%
\hfill
\subfigure[Feasible solution vs. Energy.]{{\includegraphics[width=0.5\textwidth]{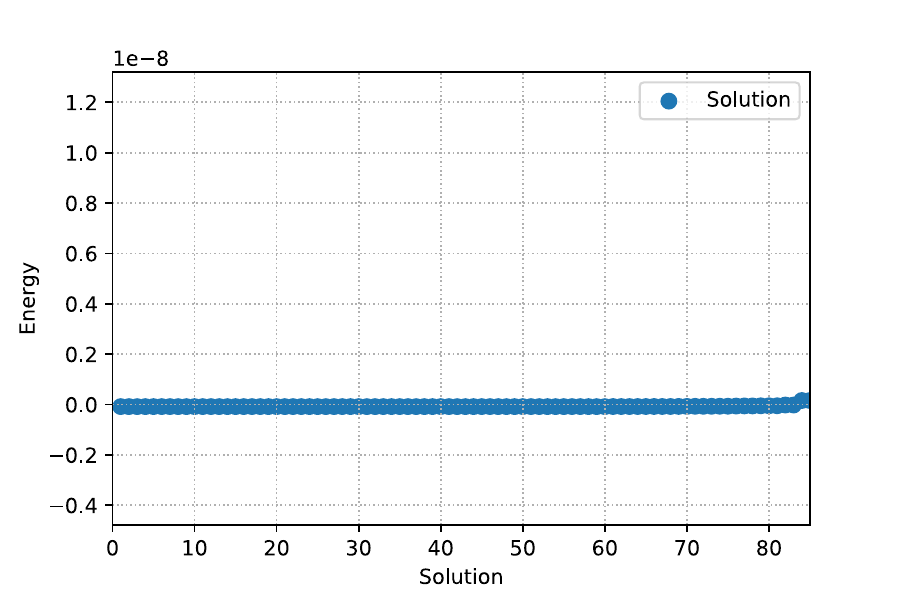} }}%
\caption{The solution and energy from the QAM in the case of scenario 2 and the number of UAV is 7.}%
\label{fig.Energy}%
\end{figure}
Fig.~\ref{fig.Energy} shows an example of the outcomes of the QAM for the proposed QA algorithm. 
As shown in Fig.~\ref{fig.Energy}(a), the solutions from the QAM include the feasible and infeasible solutions owing to the various number of sampling. In our formulated problem \eqref{eq:prob}, there exist constraints that induce the infeasible solution if it can not meet the constraints. In addition, we used linear approximation around zero-point to approximate like problem \eqref{eq:prob5}. To achieve this, we applied the parametric algorithm for the MILFP problem in the QUBO model. Therefore, the energy of the QUBO model should be $0$ by the MILFP definition. Fig.~\ref{fig.Energy}(b) shows the feasible solution to problem \eqref{eq:refo16} while considering the constraints. We can observe that the solutions locate around the energy level of $0$. In the case of problem \eqref{eq:QUBOH}, if there remains the penalty term $\lambda_{p2}H_{C9}$ or the optimal scaling parameter $\lambda_I$ is not utilized, the energy of QUBO model can not be $0$. This result confirms that the condition of linear approximation is satisfied, and our proposed QA algorithm provides solutions that closely approximate the global optimum. Among the various feasible solutions, we select the solution with the lowest energy value. 

To demonstrate the effectiveness of the proposed QA algorithm in solving the clustering problem, we compare its performance with the SD, SA, and K-means++ algorithms. For the purpose of comparison, we utilize the objective function of problem \eqref{eq:prob3} as the evaluation metric. The simulation results are depicted in Fig.~\ref{fig.Clustering}, and a detailed breakdown is provided in Table~\ref{table2}. The GUs from different clusters are indicated by different colors. Each UAV is marked with a triangle of a distinct color. The circular coverage area of each UAV is represented by a dotted line circle of the corresponding color. 

In Fig.~\ref{fig.Clustering}(a), the SD algorithm yields only a locally optimal solution, resulting in 29$\%$ poor matching GUs associated with UAVs. In Fig.~\ref{fig.Clustering}(b), the SA algorithm performs better than the SD algorithm, but it requires more running time. In Fig.~\ref{fig.Clustering}(c), the K-mean++ algorithm provides an appropriate and effective solution, with only 6$\%$ poor matching GUs located at the edge of the UAVs' service areas. The running time for this result is 0.18$s$. However, the proposed QA algorithm ensures a global optimal solution, as shown in Fig.~\ref{fig.Clustering}(d). Despite the presence of GUs at the edge of the service areas, our proposed QA algorithm can produce a near-optimal solution by solving the distance-based combinatorial optimization problem. In terms of running time, the K-means++ algorithm exhibits the best performance among the classical algorithms. However, the QPU access time of the proposed QA algorithm is only $0.032$ seconds. It is note worthy that the numerical results are presented using Min-Max Normalization with the objective. The SD and SA algorithms do not provide effective solutions. 
\renewcommand{\figurename}{{Fig.}}
\begin{figure*}[t]%htp
\begin{center}
\hspace{2mm}
\subfigure[SD-based clustering]{{\includegraphics[width=0.22\textwidth]{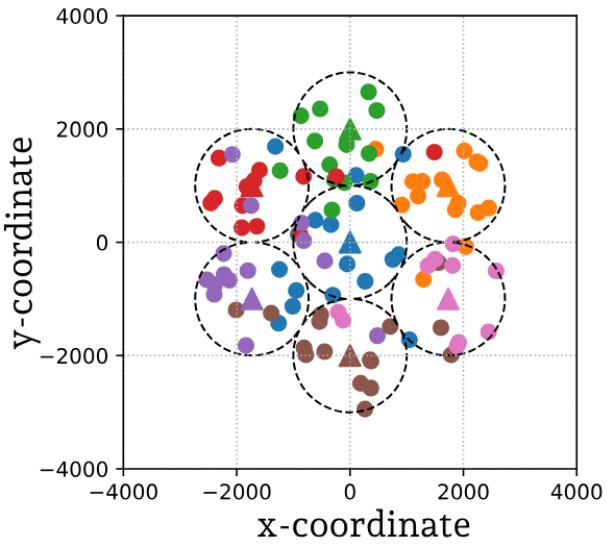} }}%
\hspace{-1mm}
\subfigure[SA-based clustering]{{\includegraphics[width=0.22\textwidth]{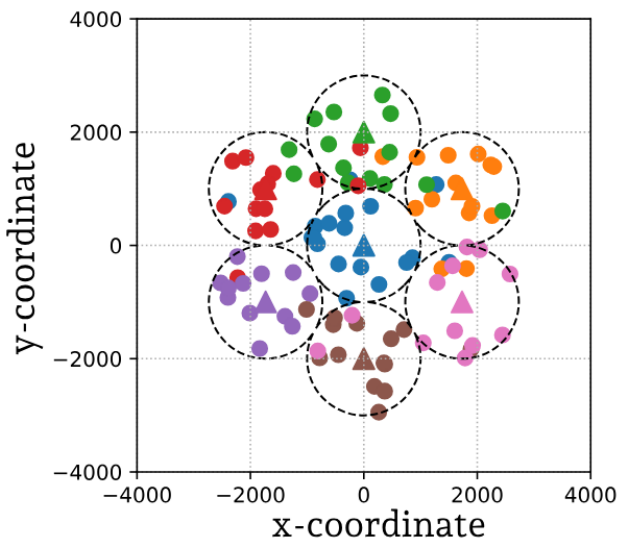} }}%
\hspace{-1mm}
\subfigure[K-means++ clustering]{{\includegraphics[width=0.22\textwidth]{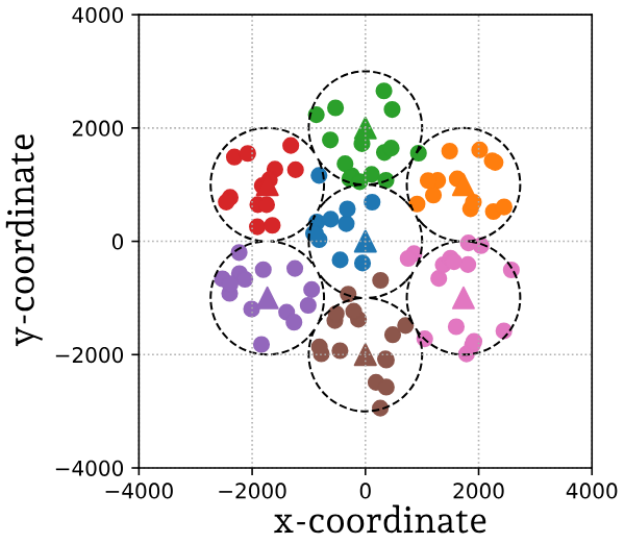} }}%
\hspace{-1mm}
\subfigure[QA-based clustering]{{\includegraphics[width=0.27\textwidth]{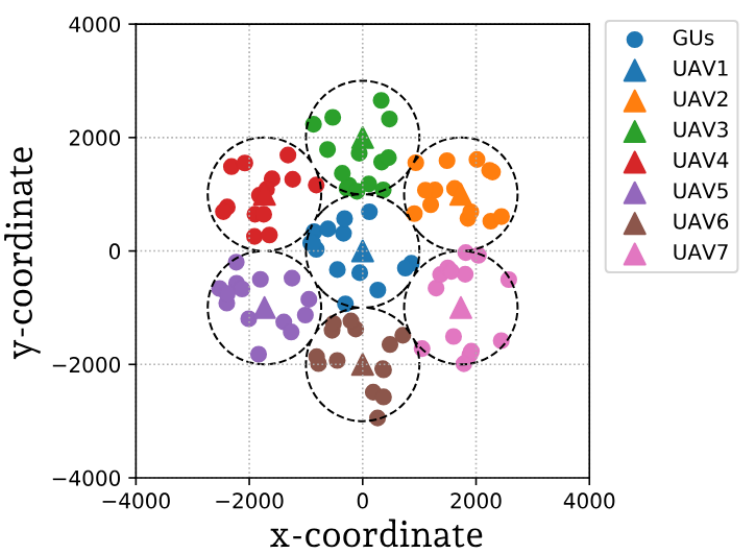} }}%
\end{center}
\caption{Clustering comparison of SD, SA, K-means++, and QA.}%
\label{fig.Clustering}%
\end{figure*}
\renewcommand{\figurename}{{Fig.}}
\begin{figure*}[t]%htp
\begin{center}
\hspace{2mm}
\subfigure[5 UAVs (K-means++)]{{\includegraphics[width=0.22\textwidth]{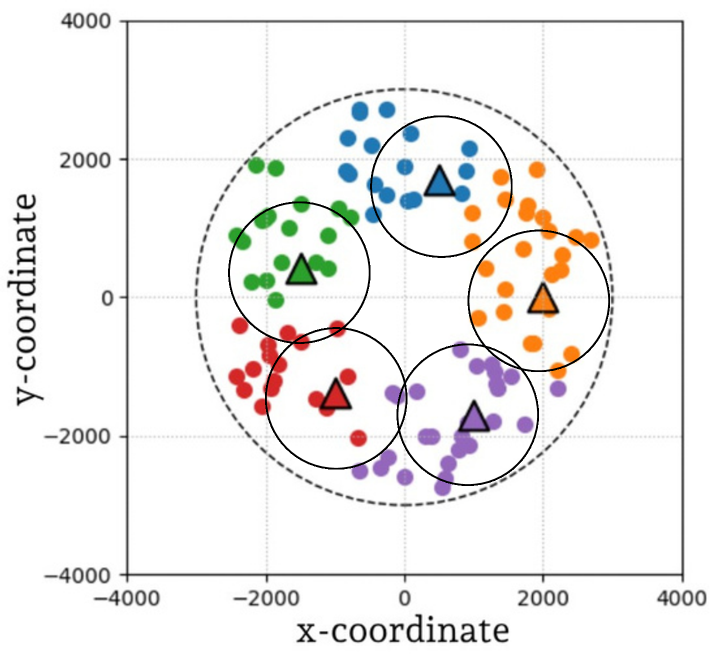} }}%
\hspace{-1mm}
\subfigure[5 UAVs (QA)]{{\includegraphics[width=0.22\textwidth]{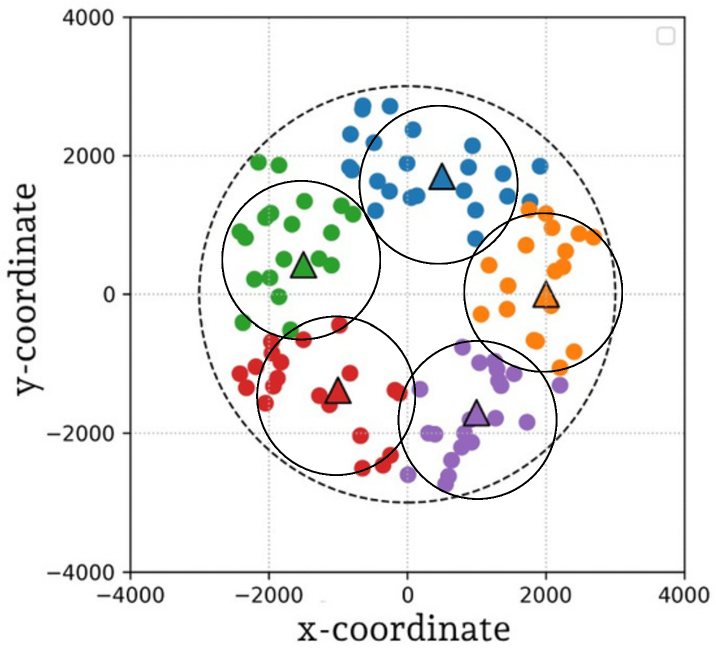} }}%
\hspace{-1mm}
\subfigure[6 UAVs (K-means++)]{{\includegraphics[width=0.22\textwidth]{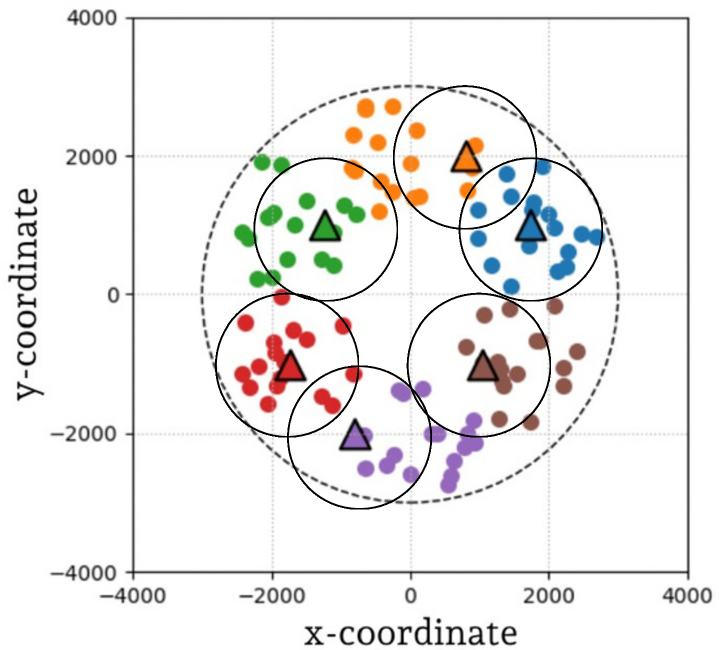} }}%
\hspace{-1mm}
\subfigure[6 UAVs (QA)]{{\includegraphics[width=0.22\textwidth]{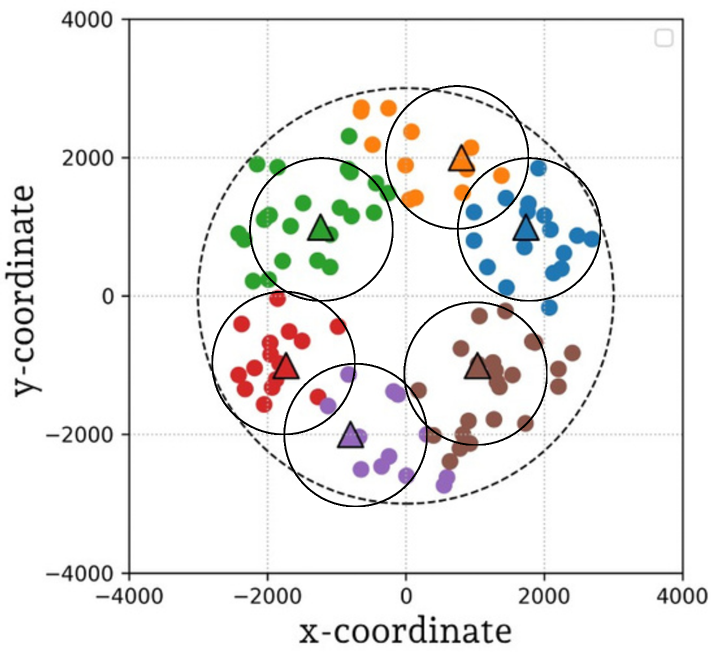} }}%
\end{center}
\caption{{Clustering comparison of K-means++ and QA in case the coordinates of the UAVs change.}}%
\label{fig.Compare_Clustering_Kmean_QA}%
\end{figure*}
\begin{table}
\centering 
\caption{Clustering Performance of each algorithm}  
\label{table2} 
\begin{tabular}{|c|c|c|c|c|}
\hline
\begin{tabular}[c]{@{}c@{}}\end{tabular} & SD & SA  & K-means++ & QA  \\ \hline
Poor matching GUs ($\%$)                                                      
& $29\%$ & $16\%$  & $6\%$  & $0\%$ \\ \hline
Normalized value                                                      
& 1 & 0.624  & 0.065  & 0 \\ \hline
Running time (s)
& 5.78 & 72.76 & 0.18 & 0.032 \\ \hline
\end{tabular}
\end{table}
{In Fig.~\ref{fig.Compare_Clustering_Kmean_QA}, we also perform the changes of the coordinates of the UAVs in two scenarios when the number of UAVs is $5$ and $6$ and compare with the K-means++ algorithm. From Fig.~\ref{fig.Compare_Clustering_Kmean_QA}(a) and Fig.~\ref{fig.Compare_Clustering_Kmean_QA}(c), when the UAV changes coordinates, the K-means++ algorithm gives suboptimal performance results 
when clustered GUs are not suitable for UAVs. This led to a number of UAVs
being overloaded as they have to manage too many GUs while the remaining UAVs handleless. Furthermore, there are GUs that connect to distant UAVs, leading to poor channel conditions ad reduced system performance. Meanwhile, Fig~\ref{fig.Compare_Clustering_Kmean_QA}(b) and Fig.~\ref{fig.Compare_Clustering_Kmean_QA}(d) shows the effectiveness of QA algorithms on user clustering problems.
The QA algorithm produces a good clustering ability when most GUs are connected to the nearest UAVs, leading to improved system performance compared to the K-means++ algorithm. It is evident that our proposed QA algorithm is better suited for the clustering problem.}
\renewcommand{\figurename}{{Fig.}}
\begin{figure}[t]%
\subfigure{{\includegraphics[width=0.5\textwidth]{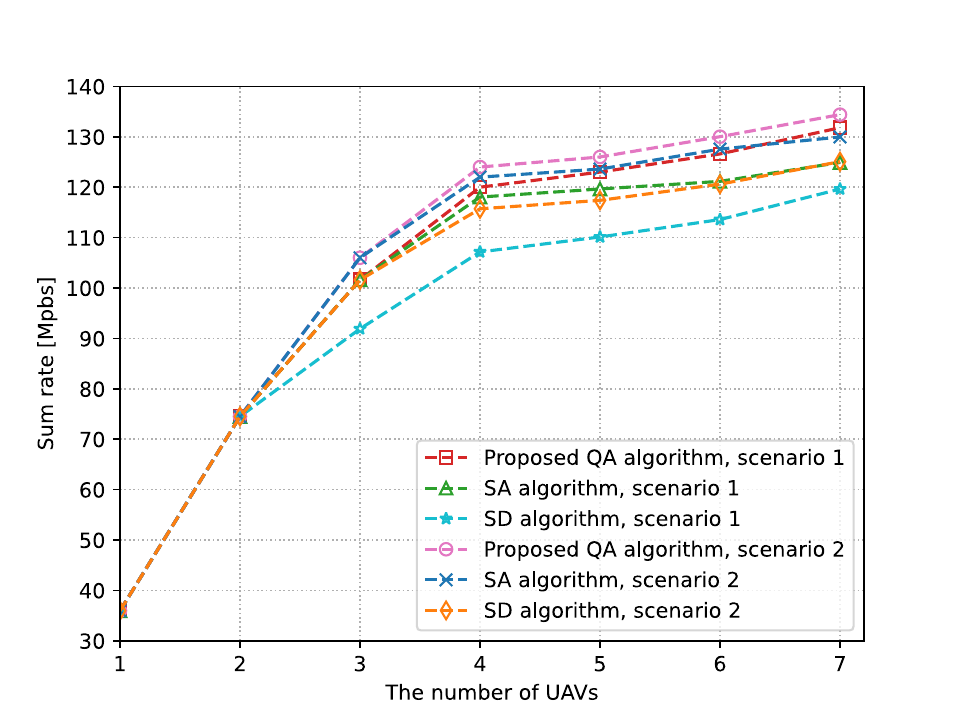} }}%
\caption{{Sum rate of the proposed QA algorithm and benchmarks with a varying number of UAVs}}%
\label{fig.sumrate}%
\end{figure}

Fig.~\ref{fig.sumrate} depicts the sum rate achieved by different algorithms for varying numbers of UAVs in each scenario. The overall sum rate increases as the number of UAVs increases. This is because more GUs can be served, even with a limited number of channels. We can observe that scenario 2 exhibits a better performance than scenario 1 because the UAVs in scenario 2 can utilize more sub-channels to serve GUs. Additionally, Fig.~\ref{fig.sumrate} shows that our proposed QA algorithm maximizes the sum rate more effectively than the other algorithms. Meanwhile, the SD algorithm can only provide a locally optimal solution in these scenarios. Notably, the difference in sum rate between the proposed QA and SA algorithms becomes apparent starting from $4$ UAVs. This discrepancy arises from the impact of the simulation size on the performance of the SA algorithm. Even if there are more channels in scenario 2, the SA algorithm allocates resources to the GUs with weaker channels and transmit power, leading to poor service. 

Fig.~\ref{sumrate_gus} depicts the total data rate achieved by different algorithms when the number of GUs is varied and the number of UAVs is 5 in scenario 1. It is observed that as the number of GUs increases, the sum rate increases. However, the proposed QA algorithm in different GUs cases gives a much better sum rate than the SD or SA algorithms. From these results, we can see that our proposed QA algorithm provides a better solution compared to other benchmarks. Fig.~\ref{fig:runningtime} illustrates the average running time of the QA, SD, and SA algorithms. It shows that the running time of all algorithms, except the proposed QA algorithm, increases as the simulation size grows. The SD algorithm can obatin the local optimal point faster than the SA algorithm due to its superior time complexity. The proposed QA algorithm, on the other hand, consistently provides the results within a very short time. As an example, the QPU access time of the proposed QA algorithm is merely $0.032$ seconds. In addition, the running time remains the same across all cases. This uniformity arises from the dependence of the QA running time on the energy gap. In this work, the problem is solved using the D-wave hybrid solver, which consists of a classical CPU and a QPU. However, specific parameters related to running time control are not provided, making it difficult to determine the total running time, which consists of both CPU and QPU processing time. Even if there is no clear total running time of the proposed QA algorithm, we focus on QPU access time because it can be implemented using only the QPU. Based on these findings, we can verify that the proposed QA algorithm outperforms other benchmarks in terms of running time. 

\renewcommand{\figurename}{{Fig.}}
\begin{figure}[t]
\centering
\includegraphics[width=\linewidth]{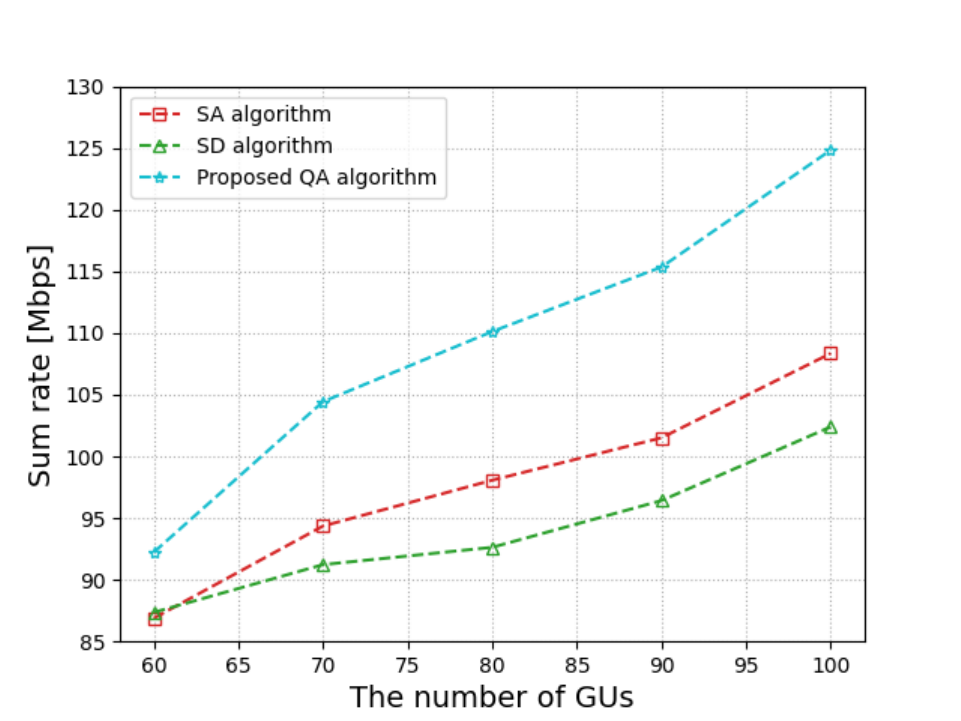}
\caption{{Sum rate of the proposed QA algorithm and benchmarks with a varying number of GUs.}}
\label{sumrate_gus}
\end{figure}

\subsection{Practicality and Challenges}
The practicality of the proposed framework hinges on the availability and capabilities of quantum machines used within a centralized network controller. The number of qubits required is dictated by the complexity of the optimization and processing tasks. For example, solving the formulated problem~\eqref{eq:refo1} may necessitate $M\times K\times L$ logical qubits, which could surpass the capacity of current quantum systems, such as D-Wave's sixth-generation processor supporting up to $7000$ qubits \cite{mcgeoch2022d}. Moreover, implementing a QUBO model on D-Wave's QAM requires careful consideration of the relationship between logical and physical qubits. Logical qubits represent the variables in the problem, while physical qubits are the actual qubits on the QPU that realize these logical qubits. Due to the QPU's specific topology, a process called minor embedding is employed to represent the logical qubits~\cite{PhysRevA.92.042310}.
While hybrid workflows like D-Wave's Hybrid Solver Service (HSS) can mitigate this by reducing the required number of qubits, the inherent limitations of quantum hardware—such as noise, error rates, and scalability—remain significant challenges. Furthermore, current quantum machines are largely research-focused and accessible only via online platforms, with commercially viable solutions still in development. Despite these issues, our framework provides a promising foundation for future research and practical advances as quantum technologies mature and become more accessible. This work underscores the potential of quantum-enabled approaches in addressing complex optimization problems in wireless communication and related domains.

\section{Conclusion}\label{sec_conclusion}
This study proposes a novel QA-based approach for addressing the user clustering and resource allocation problem in multi-UAV-aided wireless networks.
We formulate the sum rate maximization problem as a combinatorial optimization problem. To solve this problem efficiently, we decompose it into two sub-problems: 1) QA-based clustering and 2) joint optimization of sub-channel assignment and power allocation given the clustering. Each step involves formulating a combinatorial optimization problem and converting it into a QUBO model, which can be rapidly solved by using a QAM with CQM solver. In the second step, we also present a MILFP-based technique to derive the optimal scaling parameter. It is noteworthy that the MILFP theory satisfies various transformation challenges, such as linear approximation and fractional function. Simulation results demonstrate the significant effectiveness of the proposed algorithm over other benchmarks in terms of clustering, sum rate, and running time. Especially, Fig.~\ref{fig.Energy}(b) exhibits a feasible set of the MILFP-based QUBO formulation, emphasizing the requirement for the energy to be zero. Our future work will focus on applying this method to address diverse optimization problems in large-scale wireless communication systems. We will explore strategies for converting hard-constraint problems to unconstrained problems, considering factors such as the bound of SINR or throughput in wireless systems. 
\renewcommand{\figurename}{{Fig.}}
\begin{figure}[t]
\centering
\includegraphics[width=\linewidth]{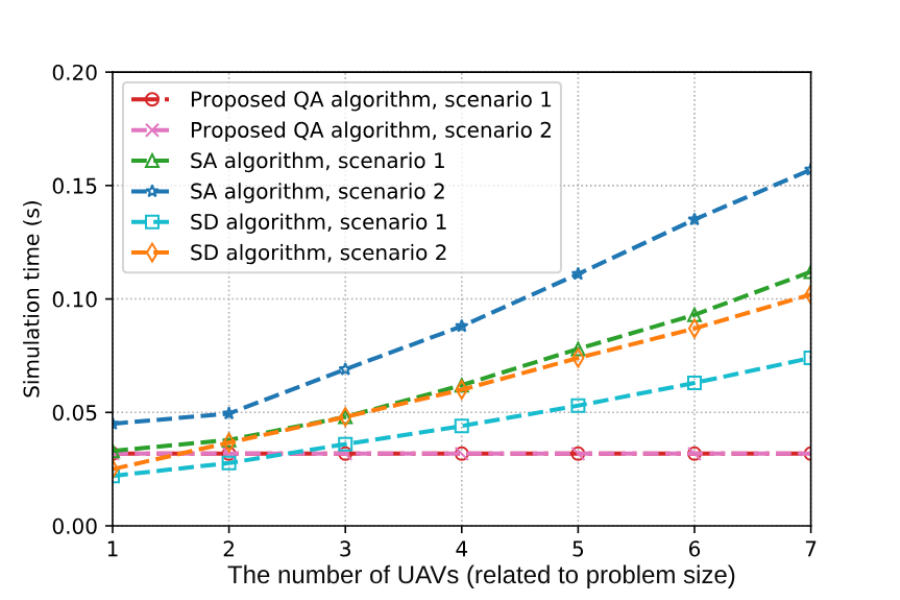}
\caption{The average running times (in seconds) for each algorithm for a varying number of UAVs.}
\label{fig:runningtime}
\end{figure}

\bibliographystyle{IEEEtran}
\bibliography{bibliography}

\begin{IEEEbiography}[{\includegraphics[width=1in,height=1.25in,clip,keepaspectratio]{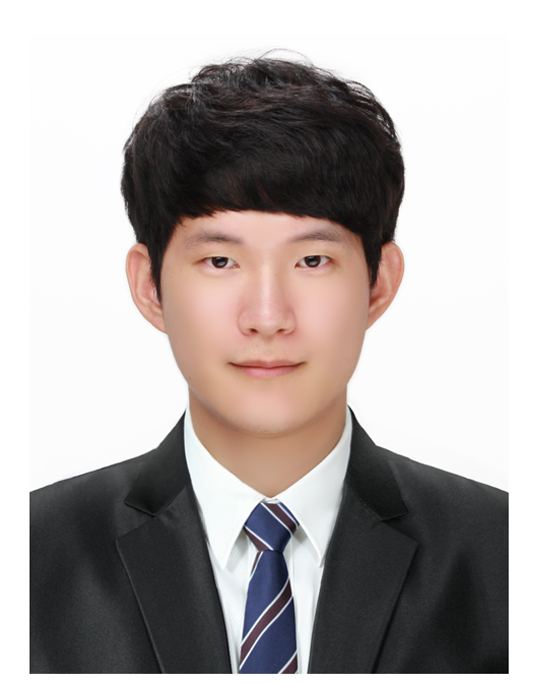}}]{Seon-Geun Jeong} received a B.S. degree in electrical engineering from Pusan National University, Busan, South Korea, in 2017, where he is currently pursuing an integrated Ph.D. program at the Department of Information Convergence Engineering, Artificial Intelligence Convergence Research Center, Pusan National University, Busan, South Korea.
His current research interests include quantum annealing, quantum machine learning, and quantum information.
\end{IEEEbiography}

\begin{IEEEbiography}[{\includegraphics[width=1in,height=1.25in,clip,keepaspectratio]{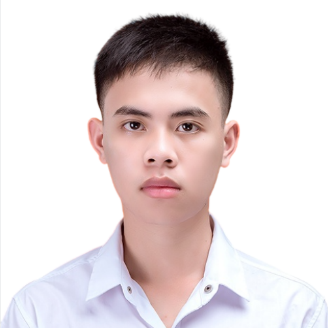}}]{Pham Dang Anh Duc} is currently pursuing a B.S. degree in Computer Engineering at the School of Information and Communication Technology, Hanoi University of Science and Technology, Hanoi, Vietnam. His research interests include edge computing, quantum computing, and optimization for wireless communication.
\end{IEEEbiography}

\begin{IEEEbiography}[{\includegraphics[width=1in,height=1.25in,clip,keepaspectratio]{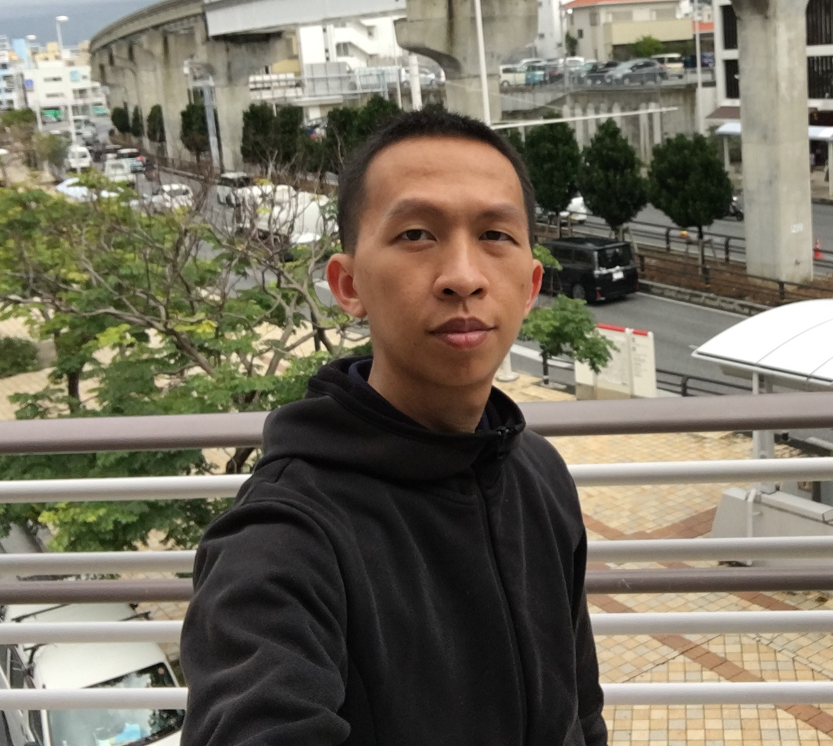}}]{Quang Vinh Do} received the bachelor’s degree in electrical engineering from Ho Chi Minh City University of Technology, Vietnam, in 2009, the master’s degree in electronic and computer engineering from RMIT University, Australia, in 2013, and the Ph.D. degree in electrical engineering from the University of Ulsan, South Korea, in 2020. He was a Postdoctoral Researcher with the University of Ulsan, from September 2020 to February 2021. From March 2021 to August 2023, he was a Post-Doctoral Research Fellow with the Artificial Intelligence Research Center, Pusan National University, South Korea. He is currently a Lecturer at Ton Duc Thang University, Vietnam. His research interests include developing and applying artificial intelligence techniques to wireless communication networks.
\end{IEEEbiography}

\begin{IEEEbiography}[{\includegraphics[width=1in,height=1.25in,clip,keepaspectratio]{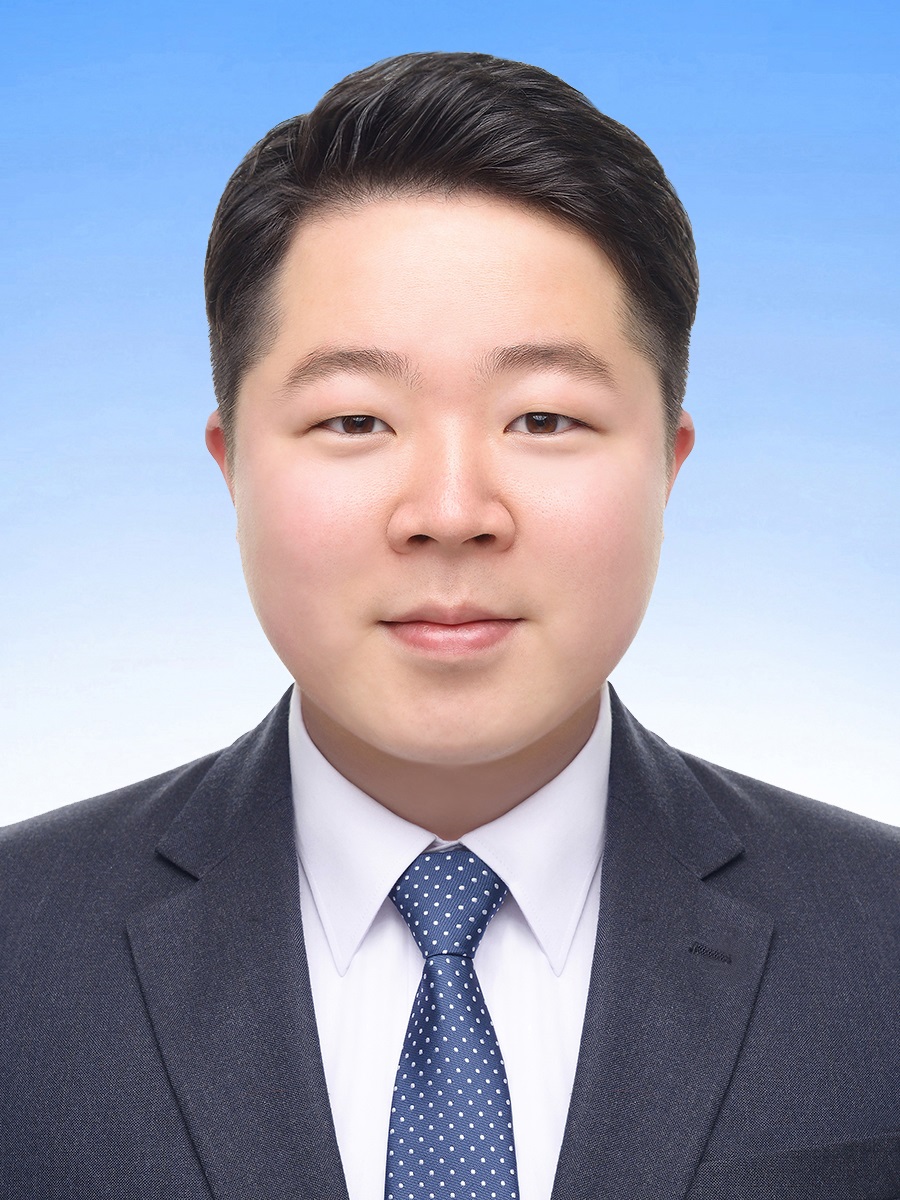}}]{Dae-Il Noh} received a double B.S. degree in Electrical Engineering and Biomaterials Science and an M.S. degree from the Department of Information Convergence Engineering, Pusan National University, Busan, Korea, in 2018 and 2022, respectively. He is currently pursuing a Ph.D. degree at the Department of Information Convergence Engineering, Pusan National University, Busan, Korea. His research interests include signal processing, machine learning, deep learning, wireless communication, and quantum-inspired evolutionary computation.
\end{IEEEbiography}

\begin{IEEEbiography}[{\includegraphics[width=1in,height=1.25in,clip,keepaspectratio]{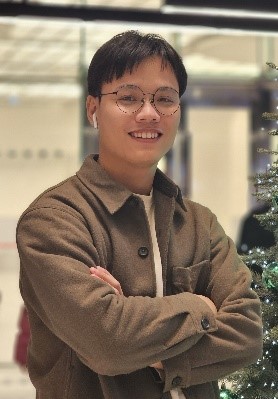}}]{Nguyen Xuan Tung} received the B.S. degree in electronics and telecommunications engineering from the Hanoi University of Science and Technology, Vietnam, in 2019, and the Master degree in intelligent mechatronic engineering from Sejong University, Korea, in 2022. He is currently working toward a PhD degree with the Department of Information Convergence Engineering at Pusan National University, South Korea. His research interests include optimization theory, machine learning, and quantum computing for wireless communications.
\end{IEEEbiography}

\begin{IEEEbiography}[{\includegraphics[width=1in,height=1.25in,clip,keepaspectratio]{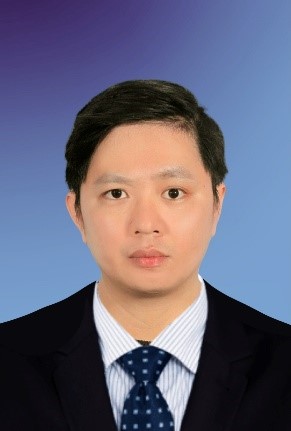}}]{Trinh Van Chien} (Member, IEEE) received the B.S. degree in electronics and telecommunications from Hanoi University of Science and Technology (HUST), Hanoi, Vietnam, in 2012, the M.S. degree in electrical and computer engineering from Sungkyunkwan University, Seoul, South Korea, in 2014, and the Ph.D. degree in communication systems from Linköping University, Linköping, Sweden, in 2020. He was a Research Associate with the University of Luxembourg, Esch-sur-Alzette, Luxembourg. He is currently with the School of Information and Communication Technology, HUST. His interest lies in convex optimization problems and machine learning applications for wireless communications and image and video processing. Dr. Chien also received the Award of Scientific Excellence in the first year of the 5G Wireless Project funded by European Union Horizon’s 2020 and the National Outstanding Young Scientist Award (Golden Globe) in Science and Technology, Vietnam, in 2023. He was an Exemplary Reviewer of IEEE WIRELESS COMMUNICATIONS LETTERS in 2016, 2017, and 2021.
\end{IEEEbiography}

\begin{IEEEbiography}[{\includegraphics[width=1in,height=1.25in,clip,keepaspectratio]{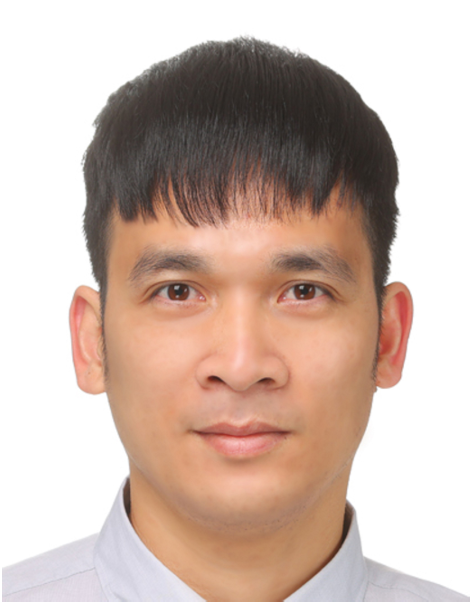}}]{Quoc-Viet Pham} (M’18, SM’23) is currently an Assistant Professor in Networks and Distributed Systems at the School of Computer Science and Statistics, Trinity College Dublin, Ireland. He earned his Ph.D. degree in Telecommunications Engineering from Inje University, Korea, in 2017. He specialises in applying convex optimisation, game theory, and machine learning to analyse and optimise cloud edge computing, wireless networks, and IoT systems. He was awarded the Korea NRF funding for outstanding young researchers for the term 2019-2024. He was a recipient of the Best Ph.D. Dissertation Award in 2017, Top Reviewer Award from IEEE Transactions on Vehicular Technology in 2020, Golden Globe Award in Science and Technology for Younger Researchers in Vietnam in 2021, IEEE ATC Best Paper Award in 2022, and IEEE MCE Best Paper Award in 2023. He was honoured with the IEEE ComSoc Best Young Researcher Award for EMEA 2023 in recognition of his research activities for the benefit of the Society. He is currently serving as an Editor for IEEE Communications Letters, IEEE Communications Standards Magazine, IEEE Communications Surveys $\&$ Tutorials, Journal of Network and Computer Applications, and REV Journal on Electronics and Communications.
\end{IEEEbiography}

\begin{IEEEbiography}[{\includegraphics[width=1in,height=1.25in,clip,keepaspectratio]{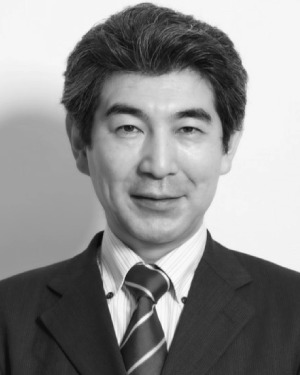}}]{Mikio Hasegawa} (Member, IEEE) received the B.Eng., M.Eng., and Dr.Eng. degrees from the Tokyo University of Science, Japan, in 1995, 1997, and 2000, respectively. From 1997 to 2000, he was a Research Fellow with the Japan Society for the Promotion of Science (JSPS). From 2000 to 2007, he was with the Communications Research Laboratory (CRL), Ministry of Posts and Telecommunications, which was reorganized as the National Institute of Information and Communications Technology (NICT), in 2004. He is currently a Professor with the Department of Electrical Engineering, Faculty of Engineering, Tokyo University of Science. His research interests include mobile networks, cognitive radio, neural networks, machine learning, and optimization techniques. He is a Senior Member of IEICE.
\end{IEEEbiography}

\begin{IEEEbiography}[{\includegraphics[width=1in,height=1.25in,clip,keepaspectratio]{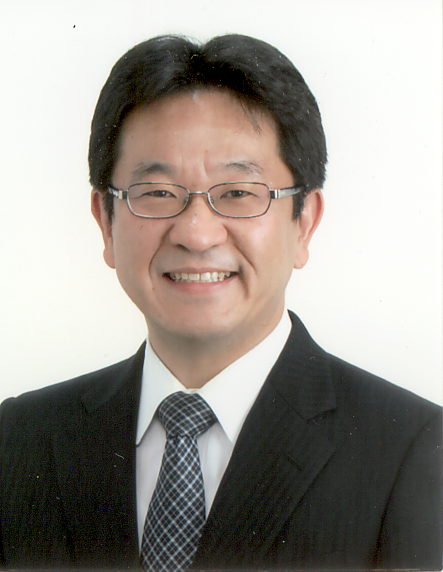}}]{Hiroo Sekiya} (Senior Member, IEEE) received B.E., M.E., and Ph.D. degrees in electrical engineering from Keio University, Yokohama, Japan, in 1996, 1998, and 2001, respectively. Since April 2001, he has been with Chiba University, Chiba, Japan, where he is currently a Professor with the Graduate School of Engineering.
His research interests include high-frequency high-efficiency tuned power amplifiers, resonant dc/dc power converters, wireless power transfer, and digital signal processing for wireless communications. He has served as a BoG member of IEEE CASS (2020-2025), AE of IEEE JESTPE,  IEEE TCAS-II, IET CDS, and so on.
\end{IEEEbiography}

\begin{IEEEbiography}[{\includegraphics[width=1in,height=1.25in,clip,keepaspectratio]{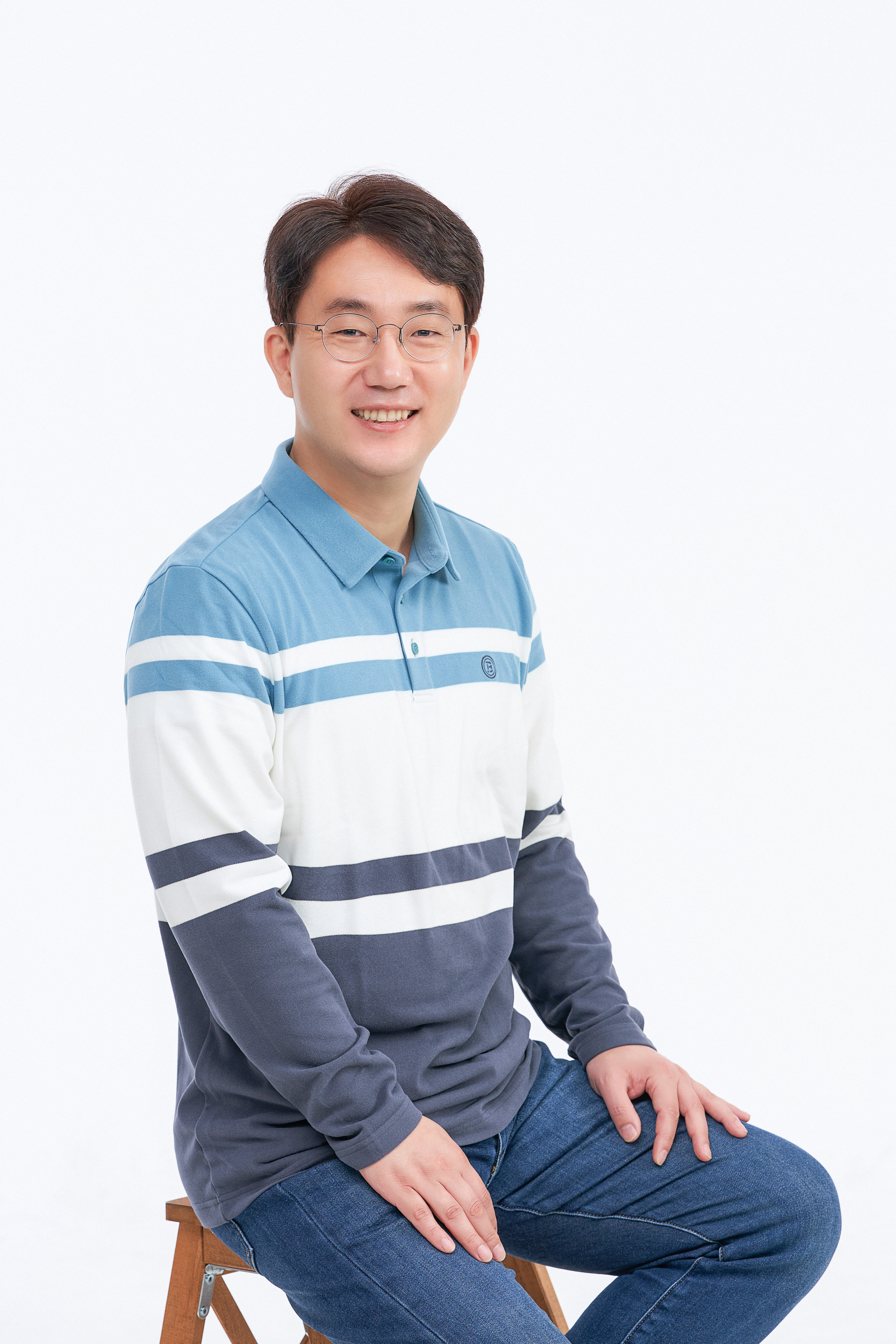}}]{Won-Joo Hwang} (Senior Member, IEEE) received the B.S. and M.S. degrees in computer engineering from Pusan National University, Busan, South Korea, in 1998 and 2000, respectively, and the Ph.D. degree in information systems engineering from Osaka University, Osaka, Japan, in 2002. From 2002 to 2019, he was employed as a full-time Professor with Inje University, Gimhae, South Korea. He is currently a full-time Professor with the School of Computer Science and Engineering, Pusan National University. His research interests include optimization theory, game theory, machine learning, and data science for wireless communications and networking.
\end{IEEEbiography}

%\EOD
\vfill
\end{document}